\colorlet{darkred}{red!80!black}
\colorlet{darkgreen}{green!70!black}
\colorlet{darkyellow}{yellow!90!black}
\colorlet{darkorange}{orange!88!black}
\colorlet{darkblue}{blue!75!black}
\newcommand\qed{\hfill\ensuremath{\Box}\smallskip}
\newtheorem{theorem}{Theorem}[section]
\newtheorem{corollary}[theorem]{Corollary}
\newtheorem{lemma}[theorem]{Lemma}
\newenvironment{proof}{
        \noindent {\bf Proof: }}{ }
\newcommand\buchi{B\"uchi}
\newcommand\cobuchi{Co-B\"uchi}
\newcommand\A{\mathcal A}
\newcommand\B{\mathcal B}
\newcommand\G{\mathcal G}
\newcommand\scL{\mathcal L}
\newcommand\scP{\mathcal P}
\renewcommand\uplus{\dot{\cup}}
\renewcommand\uplus{\cup}
\begin{document}

\sloppy

\title{Minimising Good-for-Games automata is NP complete%
\footnote{This work was partly supported by the Engineering and Physical Science Research Council (EPSRC) through the grant EP/P020909/1 `Solving Parity Games in Theory and Practice'.}}

\author{Sven Schewe\\University of Liverpool\\\tt sven.schewe@liverpool.ac.uk}

\date{}

\maketitle
\begin{abstract}
This paper discusses the hardness of finding minimal good-for-games (GFG) \buchi, \cobuchi, and parity automata with state based acceptance.
The problem appears to sit between finding small deterministic and finding small nondeterministic automata,
where minimality is NP-complete and PSPACE-complete, respectively.
However, recent work of Radi and Kupferman has shown that minimising \cobuchi\ automata with transition based acceptance is tractable,
which suggests that the complexity of minimising GFG automata might be \emph{cheaper} than minimising deterministic automata.

We show for the standard state based acceptance that the minimality of a GFG automaton is NP-complete for \buchi, \cobuchi, and parity GFG automata.
The proofs are a surprisingly straight forward generalisation of the proofs from deterministic \buchi\ automata:
they use a similar reductions, and the same hard class of languages.
\end{abstract}

\section{Introduction}

Good-for-games (GFG) automata form a useful class of automata that can be used to replace deterministic automata to recognise languages in several settings,
like synthesis \cite{DBLP:conf/csl/HenzingerP06}.
As good-for-games automata sit between deterministic and general nondeterministic automata, it stands to be expected that
the complexity of their minimality also sits between the minimality of deterministic automata (NP-complete \cite{Schewe/10/minimise}) and
nondeterministic automata (which is PSPACE-complete like for nondeterministic finite automata \cite{DBLP:journals/siamcomp/JiangR93}).
It thus came as a surprise when Radi and Kupferman showed that minimising \cobuchi\ automata with transition based acceptance is tractable \cite{DBLP:conf/icalp/RadiK19}.

This raises the question if the difference is in good-for-games automata being inherently simpler to minimise, or if it is a property of
choosing the less common transition based acceptance.
We show that the answer for the more common state based acceptance is that minimising GFG automata is as hard as minimising deterministic automata.

The proof generalises the NP-completeness proof from \cite{Schewe/10/minimise}---and does not extend to the hardness of  minimising transition based automata.

\section{Automata}

\subsection{Nondeterministic Parity Automata}
Parity automata are word automata that recognise $\omega$-regular languages over finite set of symbols.
A \emph{nondeterministic parity automaton} (NPA) is a tuple $\scP = (\Sigma,Q,q_0,\delta,\pi)$, where
\begin{itemize}
\item $\Sigma$ denotes a finite set of symbols,
\item $Q$ denotes a finite set of states,
\item $q_0 \in Q_+$ with $Q_+ = Q \uplus \{\bot,\top\}$ denotes a designated initial state,
\item $\delta: Q_+ \times \Sigma \rightarrow 2^Q_+$ (with $2^Q_+ = 2^Q \uplus \{\{\bot\},\{\top\}\} \setminus \{\emptyset\}$ is a function that maps pairs of states and input letters to either a non-empty set of states, or to $\bot$ (false, immediate rejection, blocking) or $\top$ (true, immediate acceptance)%
\footnote{The question whether or not an automaton can immediately accept or reject is a matter of taste. Often, immediate rejection is covered by allowing $\delta$ to be partial while there is no immediate acceptance.
We allow both---so$\top$ and $\bot$ are not counted as states---but treat them as accepting and rejecting sink states, respectively, for technical convenience.},
such that $\delta(\top,\sigma)=\{\top\}$ and $\delta(\bot,\sigma)=\{\bot\}$ hold for all $\sigma\in \Sigma$, and

\item $\pi: Q_+ \rightarrow P \subset \mathbb{N}$ is a priority function that maps states to natural numbers
(mapping $\bot$ and $\top$ to an odd and even number, respectively), called their \emph{priority}.
\end{itemize}

Parity automata read infinite \textbf{input words} $\alpha = a_0 a_1 a_2 \ldots \in \Sigma^\omega$.
(As usual, $\omega = \mathbb N_0$ denotes the non-negative integers.)
Their acceptance mechanism is defined in terms of runs:
A run $\rho =r_0r_1r_2\ldots\in {Q_+}^\omega$ of $\scP$ on $\alpha$ is an $\omega$-word that satisfies $r_0=q_0$ and, for all $i\in \omega$, $r_{i+1}\in\delta(r_i,a_i)$.
A run is called \emph{accepting} if the highest number occurring infinitely often in the infinite sequence $\pi(r_0)\pi(r_1)\pi(r_2)\ldots$ is even, and \emph{rejecting} if it is odd.
An $\omega$-word is \emph{accepted} by $\scP$ if it has an accepting run.
The set of $\omega$-words accepted by $\scP$ is called its \emph{language}, denoted $\scL(\scP)$.
Two automata that recognise the same language are called \emph{language equivalent}.

We assume without loss of generality that $\max \{P\} \leq |Q|+1$.
(If a priority $p \geq 2$ does not exist, we can reduce the priority of all states whose priority is strictly greater than $p$ by $2$ without affecting acceptance.)

\subsection{\buchi\ and \cobuchi\ Automata}
\buchi\ and \cobuchi\ automata---abbreviated NBAs and NCAs---are NPAs where
the image of the priority function $\pi$ is contained in $\{1,2\}$ and $\{2,3\}$, respectively.
In both cases, the automaton is often denoted $\A=(\Sigma,Q,q_0,\delta,F)$,
where $F \subseteq Q_+$ is called (the set of) \emph{final states} and denotes those states with the higher priority $2$.
The remaining states $Q_+\setminus F$ are called \emph{non-final} states.

\subsection{Deterministic and Good-for-Games Automata}
An automaton is called \emph{deterministic} if the domain of the transition function $\delta$ consists only of singletons (i.e.\ is included in $\big\{\{q\}\mid q \in Q_+\big\}$.
For convenience, $\delta$ is therefore often viewed as a function $\bar{\delta}:Q_+ \times \Sigma \rightarrow Q_+$ (with $\delta(q,\sigma)\mapsto \{\bar{\delta}(q,\sigma)\}$).

An nondeterministic automaton is called \emph{good-for-games} (GFG) if it only relies on a limited form of nondeterminism:
GFG automata can make their decision of how to resolve their nondeterministic choices on the \emph{history} at any point of a run---rather
than using the knowledge of the complete word as a nondeterministic automaton normally would---without changing their language.
They can be characterised in many ways, including as automata that simulate deterministic automata.

We use the following formalisation:
a nondeterministic $\scP = (\Sigma,Q,q_0,\delta,\pi)$ is good-for-games if there is function $\nu : q_0 {Q_+}^* \Sigma \rightarrow Q_+$ such that,
for every infinite word $\alpha = a_0 a_1 a_2 \ldots \in \Sigma^\omega$, $\scP$ has an accepting run $\rho'$ if,
and only if, it has an accepting run $\rho =r_0r_1r_2\ldots\in {Q_+}^\omega$ with $r_0 = q_0$ and, for all $i \in \mathbb N_0$,
$r_{i+1} = \nu(r_0,\ldots,r_i;a_i)$.

Broadly speaking, a good-for-games automaton sits in the middle between a nondeterministic and a deterministic automaton:
$\scP$ and $\nu$ together define a deterministic automaton (if such a $\nu$ exists, there is a finite state one),
but as the $\nu$ does not have to be explicit provided, $\scP$ can be more succinct than a deterministic automaton.

%

\subsection{Automata Transformations \& Conventions}
For an NPA $\B=(\Sigma,Q,q_0,\delta,\pi)$ and a state $q \in Q_+$, we denote with $\B_q=(\Sigma,Q,q,\delta,\pi)$
the automaton resulting from $\B$ by changing the initial state to $q$.


There are two standard measures for the size of an automaton $\scP = (\Sigma,Q,q_0,\delta,\pi)$: the number $|Q|$ of its states, and the size
$\sum\limits{q\in Q,a \in \sigma}|\delta(q,a)|$ of its transition table.

\section{Main Result}
We show the following theorem.
\begin{theorem}
\label{theo:main}
The following problems are NP-complete.
\begin{enumerate}
 \item Given a good-for-games parity automaton and a bound $k$, is there a language equivalent good-for-games parity automaton with at most $k$ states?
 \item Given a good-for-games \buchi\ automaton and a bound $k$, is there a language equivalent good-for-games parity automaton with at most $k$ states?
 \item Given a good-for-games \cobuchi\ automaton and a bound $k$, is there a language equivalent good-for-games parity automaton with at most $k$ states?
 \item Given a good-for-games \buchi\ automaton and a bound $k$, is there a language equivalent good-for-games \buchi\ automaton with at most $k$ states?
 \item Given a good-for-games \cobuchi\ automaton and a bound $k$, is there a language equivalent good-for-games \cobuchi\ automaton with at most $k$ states?
 \item Given a good-for-games parity automaton and a bound $k$, is there a language equivalent good-for-games parity automaton with at most $k$ entries in its transition table?
 \item Given a good-for-games \buchi\ automaton and a bound $k$, is there a language equivalent good-for-games parity automaton with at most $k$ entries in its transition table?
 \item Given a good-for-games \cobuchi\ automaton and a bound $k$, is there a language equivalent good-for-games parity automaton with at most $k$ entries in its transition table?
 \item Given a good-for-games \buchi\ automaton and a bound $k$, is there a language equivalent good-for-games \buchi\ automaton with at most $k$ entries in its transition table?
 \item Given a good-for-games \cobuchi\ automaton and a bound $k$, is there a language equivalent good-for-games \cobuchi\ automaton with at most $k$ entries in its transition table?
 \item Given a deterministic parity automaton and a bound $k$, is there a language equivalent good-for-games parity automaton with at most $k$ states?
 \item Given a deterministic \buchi\ automaton and a bound $k$, is there a language equivalent good-for-games parity automaton with at most $k$ states?
 \item Given a deterministic \cobuchi\ automaton and a bound $k$, is there a language equivalent good-for-games parity automaton with at most $k$ states?
 \item Given a deterministic \buchi\ automaton and a bound $k$, is there a language equivalent good-for-games \buchi\ automaton with at most $k$ states?
 \item Given a deterministic \cobuchi\ automaton and a bound $k$, is there a language equivalent good-for-games \cobuchi\ automaton with at most $k$ states?
 \item Given a deterministic parity automaton and a bound $k$, is there a language equivalent good-for-games parity automaton with at most $k$ entries in its transition table?
 \item Given a deterministic \buchi\ automaton and a bound $k$, is there a language equivalent good-for-games parity automaton with at most $k$ entries in its transition table?
 \item Given a deterministic \cobuchi\ automaton and a bound $k$, is there a language equivalent good-for-games parity automaton with at most $k$ entries in its transition table?
 \item Given a deterministic \buchi\ automaton and a bound $k$, is there a language equivalent good-for-games \buchi\ automaton with at most $k$ entries in its transition table?
 \item Given a deterministic \cobuchi\ automaton and a bound $k$, is there a language equivalent good-for-games \cobuchi\ automaton with at most $k$ entries in its transition table?
\end{enumerate}
\end{theorem}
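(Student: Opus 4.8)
The plan is to split the equivalence into membership in NP and NP-hardness, and to organise the twenty items so that each half is really a single argument. The key structural observation is that, for a fixed input language, the minimal good-for-games parity automaton is no larger than the minimal good-for-games \buchi\ or \cobuchi\ automaton (minimising over a larger class can only help), while the witnessing automaton for a positive instance will be supplied by \cite{Schewe/10/minimise} inside the \emph{right} class. Hence a single lower bound---``every good-for-games parity automaton for the hard language needs at least the combinatorial optimum many states''---propagates to every output type, since $\scL(\B^\mu)$-minimisation over \buchi\ or \cobuchi\ automata can only return something at least as large, and a single upper-bound construction per type settles all the positive directions. Likewise, a deterministic automaton is in particular good-for-games, so the deterministic-input items are special cases of the good-for-games-input items and need no separate reduction.

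For membership in NP I would take as certificate the candidate automaton $\scP$ with at most $k$ states (or transitions) together with a finite-memory resolver $\nu$ for it, and, when the input $\B$ is itself good-for-games, a resolver $\mu$ for $\B$. Resolving $\scP$ by $\nu$ and $\B$ by $\mu$ yields deterministic automata $\scP^\nu$ and $\B^\mu$, and the verifier then checks, all in polynomial time: that $\scL(\B)\subseteq\scL(\B^\mu)$ and $\scL(\scP)\subseteq\scL(\scP^\nu)$, each of which is the emptiness of the product of a nondeterministic parity automaton with the complement of a deterministic one, hence a one-player parity game; and that $\scL(\scP^\nu)=\scL(\B^\mu)$, an equivalence test between deterministic automata. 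The reverse containments hold for free, so these checks certify simultaneously that $\scP$ is good-for-games and language equivalent to the input. The one point needing care is that the resolvers can be taken of polynomial size, so that the certificate is polynomial; I would justify this through positional determinacy of the parity games underlying the good-for-games property.

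For NP-hardness I would reuse the reduction of \cite{Schewe/10/minimise} unchanged: from an instance $I$ of the NP-hard problem reduced from there it produces a \emph{deterministic} automaton $\A_I$---realisable, as that paper shows, as a \buchi, \cobuchi, or parity automaton over the same hard class of languages---and a bound $k$ equal to the combinatorial optimum. Feeding $\A_I$ as the input discharges both the deterministic-input and the good-for-games-input items. The positive direction is immediate: if $I$ is a yes-instance, the small deterministic automaton of the appropriate type produced in \cite{Schewe/10/minimise} has at most $k$ states and is, being deterministic, good-for-games, so it witnesses a ``yes''. Over the fixed alphabet of the construction the number of transitions is within the constant alphabet-size factor of the number of states, so the state lower bound yields a transition lower bound up to that factor, and the transition-table variants follow by the same reduction with the bound rescaled accordingly.

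The real work, and the expected main obstacle, is the converse: recovering a solution of $I$ from a good-for-games parity automaton with at most $k$ states for $\scL(\A_I)$, equivalently showing that good-for-games automata buy \emph{no} succinctness for this family. This does not follow from the Myhill--Nerode lower bound of \cite{Schewe/10/minimise}, because a resolver may tell two prefixes apart through its history even when $\scP$ reaches a common state on both---indeed, for general languages good-for-games \cobuchi\ automata can be exponentially smaller than deterministic ones, which is exactly the phenomenon exploited by the transition-based tractability of \cite{DBLP:conf/icalp/RadiK19}. The lemma I would prove is therefore tailored to the hard languages: I would run the distinguishing contexts of \cite{Schewe/10/minimise} as an adversarial environment against the resolver and argue that, because the resolver must commit to its next state with no information about the future and because acceptance is \emph{state based}, it is forced through at least the combinatorial optimum many pairwise separated states on ultimately periodic inputs, yielding a deterministic automaton of the same size. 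The reliance on state-based acceptance is essential here, and is precisely what prevents the argument from extending to the transition-based setting of \cite{DBLP:conf/icalp/RadiK19}.
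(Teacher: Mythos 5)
Your NP-membership argument has a genuine gap that cannot be repaired as stated: you certify GFG-ness of the candidate $\scP$ (and of the input $\B$) by guessing finite-memory resolvers and claiming they ``can be taken of polynomial size'' by positional determinacy. The game underlying GFG-ness is not a parity game on a polynomially sized arena: its winning condition is ``if the word is in $\scL(\scP)$, then the constructed run is accepting'', and expressing the premise inside the arena requires a deterministic automaton for $\scL(\scP)$, which can be exponentially large. Worse, polynomial-size resolvers provably do not exist in general: a resolver with memory structure $M$ composed with $\scP$ yields a deterministic automaton with $|Q|\cdot|M|$ states for the same language, so the exponential succinctness of GFG \cobuchi\ automata over deterministic ones \cite{DBLP:conf/icalp/KuperbergS15}---a fact you yourself invoke in your hardness paragraph---forces exponential memory for some GFG \cobuchi\ automata. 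Hence your certificate is exponentially large for the \cobuchi\ and parity items, and the approach collapses. The paper avoids resolvers in the certificate altogether: it guesses the candidate automaton together with \emph{positional} winning strategies for the \textsf{verifier} in the two simulation games `$\B \;\mathsf{simulates}\; \scP$' and `$\scP \;\mathsf{simulates}\; \B$', each played on the product of the two given automata, where the \textsf{verifier}'s objective is a disjunction of two parity conditions and therefore a Rabin condition, so positional determinacy holds on a polynomial arena (Lemma \ref{lem:positional}) and the strategies can be checked in polynomial time. The existence of a resolver for the \emph{input} automaton is used only in the correctness argument---to show the \textsf{verifier} wins these games exactly when she should (Lemma \ref{lem:inclusion}, Theorem \ref{theo:good-for-games})---and is never materialised as part of the certificate.

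Your hardness outline is in the right spirit: the paper's Lemma \ref{lem:large} does essentially what you propose, playing concrete words against the resolver to force, for every vertex $v$, core $v$-states and core $\natural v$-states with odd priority, plus, for every edge, an endpoint with an even-priority core state, which yields the $2|V|+|C|$ lower bound matching the deterministic construction. But your sketch glosses over a point the paper must handle separately: for \cobuchi\ automata one cannot simply reuse ``the same hard class of languages''. In \cite{Schewe/10/minimise} the \cobuchi\ case follows by dualising the deterministic automaton (i.e.\ complementing the language), and this duality is unavailable for GFG automata. The paper instead introduces the \emph{adjusted language} $\scL'(\G_{v_0})$, in which trace-words must eventually stabilise on a single vertex, so that the very same automaton read as a \cobuchi\ automaton recognises it, and then reproves the lower bound with even and odd priorities swapped (Lemma \ref{lem:colarge}). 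Without this adjustment, the \cobuchi\ items of the theorem are not covered by your reduction.
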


The questions are, of course, all very similar. Note, however, that the good-for-games property of the given automaton in Properties 1 through 10 is not checked by the algorithms provided below, and
the complexity of determining GFG-ness is current research (but known to be tractable for \buchi~\cite{DBLP:conf/fsttcs/BagnolK18} and \cobuchi~\cite{DBLP:conf/icalp/KuperbergS15} automata).
The input does not include a witness for its GFG-ness but the result is not reliable if the input NPA is not good-for-games.

The proofs fall into the two categories of inclusion in NP (Section \ref{sec:inNP}), where the small good-for-games automaton can be guessed and
the guess validated with standard simulation games (Corollary \ref{cor:inNP}).

For the NP hardness (Theorem \ref{theo:hard}), it turns out that the known hardness proof for deterministic \buchi\ and \cobuchi\ automata can be adjusted to good-for-games automata, providing hardness for all combinations for our main theorem.

\section{Inclusion in NP}
\label{sec:inNP}

We start with re-visiting a standard simulation game between a \textsf{verifier}, who wants to establish language inclusion through simulation, and a \textsf{spoiler}, who wants to destroy the proof.
Note that the spoiler does not try to disprove language inclusion, but merely wants to show that it cannot be established through simulation.

\subsection{Simulation Game}

For two NPAs $\scP^1 = (\Sigma,Q_1,q_0^1,\delta_1,\pi_1)$ and $\scP^2 = (\Sigma,Q_2,q_0^2,\delta_2,\pi_2)$, we define the `$\scP^2 \;\mathsf{simulates}\; \scP^1$' game, where a \textsf{spoiler} intuitively tries to show that $\scP^1$ accepts a word not in the language of $\scP^2$, as follows.

The game is played on $Q_1 \times Q_2 \cup Q_1 \times \Sigma \times Q_2$ and starts in $(q_0^1,q_0^2)$.
In a state $(q_1,q_2) \in Q_1 \times Q_2$, the \textsf{spoiler} selects a letter $\sigma \in \Sigma$ and a $\sigma$ successor $q_1' \in \delta(q_1,\sigma)$ of $q_1$ for $\scP^1$ and moves to $(q_1',\sigma,q_2)$.
In a state $(q_1,\sigma, q_2) \in Q_1 \times \Sigma \times Q_2$, the \textsf{verifier} selects a $\sigma$ successor $q_2' \in \delta(q_2,\sigma)$ of $q_2$ for $\scP^2$ and moves to $(q_1,q_2')$.

\textsf{Verifier} and \textsf{spoiler} will together produce a play
$(q_0^1,q_0^2)(q_1^1,a_0,q_0^2)
(q_1^1,q_1^2)(q_2^1,a_1,q_1^2)
(q_2^1,q_2^2)(q_3^1,a_2,q_2^2)$
$(q_3^1,q_3^2)(q_4^1,a_3,q_3^2)\ldots$.
The \textsf{verifier} wins if, and only if, the run $q_0^1q_1^1q_2^1q_3^1\ldots$ of $\scP^1$ is rejecting \emph{or} the run  $q_0^2q_1^2q_2^2q_3^2\ldots$ of $\scP^2$ is accepting.

Simulation games have been used to validate GFG-ness right from their introduction \cite{DBLP:conf/csl/HenzingerP06}.

\begin{lemma}
\label{lem:positional}
If \textsf{verifier} wins the $\scP^2 \;\mathsf{simulates}\; \scP^1$ game, then she wins positionally, and checking if she wins is in NP.
\end{lemma}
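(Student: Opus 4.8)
The plan is to identify the \textsf{verifier}'s winning condition as a Rabin condition on the game graph and then exploit two classical facts about such games: the Rabin player wins positionally whenever she wins, and fixing a positional strategy reduces the verification to a polynomial-time non-emptiness check.

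First I would rewrite the winning condition. The \textsf{verifier} wins a play iff the induced run of $\scP^1$ is rejecting \emph{or} the induced run of $\scP^2$ is accepting. Acceptance of $\scP^2$'s run is a parity condition on the priorities $\pi_2$ seen infinitely often along the play, while rejection of $\scP^1$'s run is the complement of a parity condition, which is again a parity condition (formally one shifts $\pi_1$ by one, so that ``the maximal $\pi_1$ seen infinitely often is odd'' becomes ``the maximal $\pi_1+1$ is even''). Thus the \textsf{verifier}'s objective is a \emph{disjunction of two parity conditions}. Since every parity condition is a Rabin condition, and a disjunction of Rabin conditions is again a Rabin condition (take the union of the two families of index pairs, each pair referring to the appropriate colouring), the \textsf{verifier}'s objective is a Rabin condition on the finite arena $Q_1\times Q_2 \,\cup\, Q_1\times\Sigma\times Q_2$.

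Next I would invoke positional determinacy. Games on finite graphs with a Rabin winning condition are determined, and the player whose goal is the Rabin condition has a memoryless (positional) winning strategy on her entire winning region (Emerson--Jutla). Applied here this yields the first claim: if the \textsf{verifier} wins, she wins positionally, by a function $\tau : Q_1\times\Sigma\times Q_2 \to Q_2$ that depends only on the current position. The main point to keep in mind is the asymmetry: the \textsf{spoiler}'s dual objective is a Streett condition, for which memory can genuinely be required, so one may \emph{not} guess a positional strategy for him. This asymmetry is exactly what makes the guess-and-check below go through for the \textsf{verifier} and is, I expect, the one conceptual obstacle rather than a computational one.

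Finally, for membership in NP I would guess such a positional $\tau$, whose description is polynomial in the input, and verify it. Fixing $\tau$ turns the arena into a one-player graph $G_\tau$ in which only the \textsf{spoiler} still chooses; then $\tau$ is winning iff the \textsf{spoiler} has no play in $G_\tau$ satisfying his objective. Because the \textsf{spoiler} controls all remaining choices, he wins iff $G_\tau$ contains a reachable closed walk---equivalently a reachable strongly connected subgraph with at least one edge---whose state set $S$ satisfies that $\max\{\pi_1(s)\mid s\in S\}$ is even and $\max\{\pi_2(s)\mid s\in S\}$ is odd. This non-emptiness test ranges over polynomially many threshold pairs, each resolved by a strongly-connected-component computation, so it runs in polynomial time; the strategy $\tau$ passes iff no such walk exists. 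The only step needing care here is the reduction to this component condition, namely that the states visited infinitely often along a \textsf{spoiler} play are precisely those of some reachable strongly connected subgraph and that the \textsf{spoiler} can realise any such subgraph---the routine characterisation of optimal plays in one-player $\omega$-regular games. Combining the two directions, \textsf{verifier} wins iff some positional $\tau$ passes the polynomial check, which places the problem in NP.
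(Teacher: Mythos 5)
Your proposal is correct and follows essentially the same route as the paper: express the \textsf{verifier}'s objective as a disjunction of two parity conditions, hence a Rabin condition, invoke memoryless determinacy for the Rabin player to get positionality, then guess the positional strategy and check the resulting one-player game for the \textsf{spoiler} in polynomial time. The only (immaterial) difference is the implementation of the final check---you use threshold pairs plus strongly-connected-component computations, while the paper's footnote sketches an NL lasso-guessing argument---and your remark on the Streett/Rabin asymmetry matches the paper's reason for checking the falsifier's one-player game rather than guessing a strategy for him.
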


This is a standard inclusion game, and similar games have e.g.\ been used in \cite{DBLP:conf/icalp/KuperbergS15}.

\begin{proof}
The \textsf{verifier} plays a game with two disjunctive (from \textsf{verifier}'s perspective) parity conditions (as the complement of a parity condition is a parity condition).
A parity condition is in particular a Rabin condition, and the disjunction of Rabin conditions is still a Rabin condition.
Thus, if the verifier can meet her parity objective, she can do so positionally%
\footnote{A strategy is called positional if it only depends on the current state, not on the history of how one got there.}
\cite{Emerson/85/tableaux}.
Thus, it suffices to guess the winning strategy of the \textsf{verifier}, and then check (in P)%
\footnote{This problem is actually in NL, as the falsifier can guess the pair of winning priorities and guess a lasso-like path with an initial part,
and a repeating part that starts and ends in the same state and has the correct dominating priorities for both parity conditions.
(This does not have to be a cycle, as it might be necessary to visit a state once for establishing the dominating priority for either parity condition.)}
if the \textsf{falsifier} wins his resulting one player game with two conjunctive parity conditions.
\end{proof}

\begin{lemma}
\label{lem:inclusion}
Given an NPA $\scP^1$ and a good-for-games NPA $\scP^2$, checking $\scL(\scP^1) \subseteq \scL(\scP^2)$ is in NP.
\end{lemma}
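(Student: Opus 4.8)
The plan is to reduce the language-inclusion question to the simulation game of the previous subsection and then invoke Lemma \ref{lem:positional}. Concretely, I would prove that, when $\scP^2$ is good-for-games, the \textsf{verifier} wins the `$\scP^2 \;\mathsf{simulates}\; \scP^1$' game if, and only if, $\scL(\scP^1) \subseteq \scL(\scP^2)$. Given this equivalence, Lemma \ref{lem:positional} immediately places the inclusion check in NP: one guesses a positional winning strategy for the \textsf{verifier} and validates it in polynomial time.

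For the easy direction---\textsf{verifier} wins $\Rightarrow$ inclusion---I would argue that a winning \textsf{verifier} strategy turns accepting runs of $\scP^1$ into accepting runs of $\scP^2$ on the same word. Take any $\alpha \in \scL(\scP^1)$ with accepting run $\rho_1$, and let the \textsf{spoiler} play exactly the letters of $\alpha$ together with the transitions of $\rho_1$. The \textsf{verifier}'s winning response then produces a run $\rho_2$ of $\scP^2$ on $\alpha$; since $\rho_1$ is accepting, the winning condition forces $\rho_2$ to be accepting, whence $\alpha \in \scL(\scP^2)$. This direction does not use the good-for-games assumption.

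The converse---inclusion $\Rightarrow$ \textsf{verifier} wins---is where the good-for-games hypothesis is essential, and is the main point of the argument. Let $\nu$ be a witness for the GFG-ness of $\scP^2$, i.e.\ a history function resolving $\scP^2$'s nondeterminism. I would have the \textsf{verifier} simply play according to $\nu$: at each of her turns she has seen the letters chosen so far and her own previous $\scP^2$-moves, which is precisely the information $\nu$ consumes. Consider any resulting play over a word $\alpha$, with runs $\rho_1$ of $\scP^1$ and $\rho_2$ of $\scP^2$. If $\rho_1$ is rejecting, the \textsf{verifier} wins outright. If $\rho_1$ is accepting, then $\alpha \in \scL(\scP^1) \subseteq \scL(\scP^2)$, so $\scP^2$ accepts $\alpha$; by the defining property of $\nu$, the run it induces on $\alpha$---which is exactly $\rho_2$---is accepting, so the \textsf{verifier} wins again.

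The only subtlety to check carefully is that the \textsf{verifier}'s $\nu$-strategy is well defined within the game: she must reconstruct $\scP^2$'s run history and the letter sequence from the play prefix (which she can), and the run $\rho_2$ she produces depends only on $\alpha$, the \textsf{spoiler}'s letters, and not on the \textsf{spoiler}'s choice of $\scP^1$-transitions---so the GFG guarantee applies verbatim. With both directions established, the equivalence holds and the lemma follows from Lemma \ref{lem:positional}.
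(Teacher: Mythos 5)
Your proposal is correct and takes essentially the same route as the paper: both reduce the inclusion check to the `$\scP^2 \;\mathsf{simulates}\; \scP^1$' game, use the good-for-games strategy $\nu$ to show that \textsf{verifier} wins whenever $\scL(\scP^1) \subseteq \scL(\scP^2)$, and then invoke Lemma \ref{lem:positional} to obtain membership in NP. The only cosmetic difference is that you prove the direction `\textsf{verifier} wins $\Rightarrow$ inclusion' directly, whereas the paper argues the contrapositive by exhibiting a winning \textsf{spoiler} strategy (guessing a word $\alpha \in \scL(\scP^1)\setminus\scL(\scP^2)$ together with an accepting run of $\scP^1$) when inclusion fails.
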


\begin{proof}
Consider the `$\scP^2 \;\mathsf{simulates}\; \scP^1$' game played on an NPA $\scP^1 = (\Sigma,Q_1,q_0^1,\delta_1,\pi_1)$
and a good-for-games NPA $\scP^2 = (\Sigma,Q_2,q_0^2,\delta_2,\pi_2)$.

We first show that \textsf{spoiler} wins this if there is a word $\alpha \in \scL(\scP^1)\setminus\scL(\scP^2)$:
in this case, \textsf{spoiler} can guess such a word alongside an accepting run for $\scP^1$ for $\alpha$---note that there is no accepting run of $\scP^2$ for $\alpha$, as $\alpha \notin \scL(\scP^1)$.

We finally show that \textsf{verifier} wins this game if $\scL(\scP^2)\supseteq\scL(\scP^1)$.
In this case,  \textsf{verifier} can construct the run $q_0^2q_1^2q_2^2q_3^2\ldots$ on the word $\alpha$ the \textsf{spoiler} successively produces.
Moreover, as $\scP^2$ is good-for-games, the \textsf{verifier} can do this independent of the transitions the \textsf{spoiler} selects, 
basing her choices instead on her good-for-games strategy $\nu^2$.
If $\alpha$ is in $\scL(\scP^2)$, then $q_0^2q_1^2q_2^2q_3^2\ldots$ is accepting and \textsf{verifier} wins.
If $\alpha$ is not in $\scL(\scP^2)$, then $\alpha$ is not in $\scL(\scP^1)\subseteq \scL(\scP^2)$ either;
thus $q_0^1q_1^1q_2^1q_3^1\ldots$ is rejecting and \textsf{verifier} wins.
\end{proof}

\begin{theorem}
\label{theo:good-for-games}
Given an NPA $\scP^1$ and a good-for-games NPA $\scP^2$,
checking if $\scL(\scP^1)$ is good-for-games \emph{and} satisfies $\scL(\scP^1)=\scL(\scP^2)$ is in NP.
\end{theorem}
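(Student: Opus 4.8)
The plan is to reduce both requirements on $\scP^1$---that it is good-for-games and that $\scL(\scP^1)=\scL(\scP^2)$---to the existence of winning \textsf{verifier} strategies in two instances of the simulation game, which can then be guessed and validated as in Lemmas \ref{lem:positional} and \ref{lem:inclusion}. Let $G_1$ be the `$\scP^2\;\mathsf{simulates}\;\scP^1$' game and $G_2$ the `$\scP^1\;\mathsf{simulates}\;\scP^2$' game. I would prove the characterisation
\[
\textsf{verifier wins } G_1 \text{ and } G_2 \iff \scP^1 \text{ is good-for-games and } \scL(\scP^1)=\scL(\scP^2),
\]
and then obtain membership in NP by guessing positional \textsf{verifier} strategies for both games and checking them in polynomial time.

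For the direction from right to left I would invoke Lemma \ref{lem:inclusion} twice: if $\scP^1$ is good-for-games and the two languages coincide, then $\scL(\scP^1)\subseteq\scL(\scP^2)$ together with $\scP^2$ being good-for-games yields a winning \textsf{verifier} strategy in $G_1$, and symmetrically $\scL(\scP^2)\subseteq\scL(\scP^1)$ together with $\scP^1$ being good-for-games yields one in $G_2$. For the left-to-right direction I would first settle the languages using the soundness of simulation already established in the proof of Lemma \ref{lem:inclusion}---whenever \textsf{verifier} wins `$\B\;\mathsf{simulates}\;\A$' one has $\scL(\A)\subseteq\scL(\B)$, since \textsf{spoiler} could otherwise play a word of $\scL(\A)\setminus\scL(\B)$ along an accepting $\A$-run and win. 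Applied to $G_1$ and $G_2$ this gives $\scL(\scP^1)\subseteq\scL(\scP^2)$ and $\scL(\scP^2)\subseteq\scL(\scP^1)$, hence $\scL(\scP^1)=\scL(\scP^2)$.

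The crux is to show that $\scP^1$ is good-for-games from a winning \textsf{verifier} strategy $\tau$ in $G_2$ (the languages being, by the previous step, equal). Here I would compose $\tau$ with a good-for-games strategy $\nu^2$ of $\scP^2$: on an input word $\alpha$, feed the letters of $\alpha$ to $\nu^2$ to obtain a run of $\scP^2$, read this run as the \textsf{spoiler}'s play in $G_2$, and let $\tau$ answer with a run of $\scP^1$. The resulting choice of $\scP^1$-transitions depends only on the prefix of $\alpha$ read so far, so it is a good-for-games strategy for $\scP^1$: if $\alpha\in\scL(\scP^1)=\scL(\scP^2)$, then the $\scP^2$-run produced by $\nu^2$ is accepting, whence---$\tau$ winning $G_2$---the induced $\scP^1$-run is accepting too. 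I expect this composition to be the main obstacle, the delicate point being to check that the composed strategy is a legitimate good-for-games strategy, i.e.\ a function of the history alone rather than of information that only $\scP^2$ could supply.

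It then remains to assemble the NP procedure. By Lemma \ref{lem:positional} a winning \textsf{verifier} can be taken positional in each game, so I would guess positional \textsf{verifier} strategies for $G_1$ and for $G_2$, verify each in polynomial time exactly as in Lemma \ref{lem:positional}, and accept precisely when both are winning. By the characterisation above this decides, in NP, whether $\scP^1$ is good-for-games and $\scL(\scP^1)=\scL(\scP^2)$.
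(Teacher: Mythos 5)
Your proposal is correct and follows essentially the same route as the paper: it uses the two simulation games (`$\scP^2\;\mathsf{simulates}\;\scP^1$' and `$\scP^1\;\mathsf{simulates}\;\scP^2$'), establishes the same characterisation---verifier wins both iff $\scP^1$ is good-for-games and the languages coincide---and proves the crucial direction by composing the verifier's winning strategy in the second game with the good-for-games strategy $\nu^2$ of $\scP^2$, exactly as the paper does. The only difference is presentational: the paper checks the inclusion $\scL(\scP^1)\subseteq\scL(\scP^2)$ first and argues about the second game conditioned on it, whereas you state the biconditional for both games at once, which changes nothing of substance.
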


A simplar game is used in \cite{DBLP:conf/icalp/KuperbergS15} to establish an EXPTIME upper bound for establishing EXPTIME inclusion.
The new observations here ate the inclusions in NP.

\begin{proof}
We first use the previous lemma to check $\scL(\scP^1) \subseteq \scL(\scP^2)$ in NP.
For the rest of the proof, we assume that this test has been passed, such that $\scL(\scP^1) \subseteq \scL(\scP^2)$ was established.

We then play the same game with inverse roles, i.e.\ the `$\scP^1 \;\mathsf{simulates}\; \scP^2$' game.
The question if \textsf{verifier} wins is again in NP.

If $\scL(\scP^1) \neq \scL(\scP^2)$ holds, then the already established
$\scL(\scP^1) \subseteq \scL(\scP^2)$ entails that there is a word $\alpha \in \scL(\scP^2) \setminus \scL(\scP^1)$.
In this case \textsf{spoiler} can win by guessing such a word $\alpha \in \scL(\scP^2) \setminus \scL(\scP^1)$ alongside an
accepting run for $\scP^2$ for $\alpha$---note that there is no accepting run of $\scP^1$ for $\alpha$ in this case,
regardless of whether or not $\scP^1$ is good-for games.

If $\scP^1$ is good-for-games \emph{and} $\scL(\scP^1)=\scL(\scP^2)$ holds, then \textsf{verifier} wins
(because $\scL(\scP^2) \subseteq \scL(\scP^1)$ can be verified in NP using Lemma \ref{lem:inclusion}).

Finally, if $\scL(\scP^1)=\scL(\scP^2)$ holds and \textsf{verifier} wins, then  $\scP^1$ is good-for-games: this is because a winning strategy---like
the positional strategy that exists (Lemma \ref{lem:positional})---for \textsf{verifier} in the `$\scP^1 \;\mathsf{simulates}\; \scP^2$' game
transforms a good-for-games strategy $\nu^2$ for $\scP^2$ into a good-for-games strategy $\nu^1$ for $\scP^1$,
and $\scP^1$ can simply emulate the behaviour of $\scP^2$ using the (positional) winning strategy from of the \textsf{verifier}.
\end{proof}

This provides all upper bounds of Theorem \ref{theo:main}.

\begin{corollary}
\label{cor:inNP}
All problems from Theorem \ref{theo:main} can be solved in NP.
\end{corollary}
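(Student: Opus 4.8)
The plan is to observe that all twenty problems are solved by a single nondeterministic algorithm: guess a small candidate automaton of the prescribed acceptance type and size, and then invoke Theorem \ref{theo:good-for-games} to verify in NP that the candidate is good-for-games \emph{and} language-equivalent to the input. First I would note that in every one of the twenty problems the given automaton $\scP$ is good-for-games: in Problems 1--10 this is assumed of the input, and in Problems 11--20 the input is deterministic, hence trivially good-for-games (a deterministic automaton resolves no nondeterminism, so any $\nu$ is a witness). This is exactly the hypothesis needed to apply Theorem \ref{theo:good-for-games} with $\scP$ in the role of $\scP^2$.

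Next comes the nondeterministic guess. Given the bound $k$, I would guess a candidate automaton $\scP'$ over the same alphabet $\Sigma$, with at most $k$ states (for the state-counting variants, Problems 1--5 and 11--15) or at most $k$ transition-table entries (for the transition-counting variants, Problems 6--10 and 16--20), and whose acceptance condition is of the required form---parity, \buchi, or \cobuchi. By the convention $\max\{P\}\le|Q|+1$, the priority function ranges over only polynomially many priorities, so such a $\scP'$ admits an encoding of size polynomial in $k$. Running the NP procedure of Theorem \ref{theo:good-for-games} with $\scP^1=\scP'$ and $\scP^2=\scP$ then has an accepting computation if, and only if, $\scP'$ is good-for-games and $\scL(\scP')=\scL(\scP)$; since every \buchi\ and \cobuchi\ automaton is in particular a parity automaton, the theorem applies uniformly to all three output types. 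Composing the guess with this check yields a nondeterministic procedure that accepts precisely when a language-equivalent good-for-games automaton of the allowed shape and size exists, which is exactly what each problem asks.

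The only point requiring care---and the main, though minor, obstacle---is bounding the size of the guess so that the entire computation remains polynomial even when $k$ is written in binary. Here I would use the trivial observation that if $k$ is at least the number of states (respectively, transition-table entries) of the input $\scP$, then the answer is immediately \emph{yes}, witnessed by $\scP$ itself, which is already good-for-games and language-equivalent to itself. Hence one may assume $k$ is strictly smaller than the corresponding size measure of $\scP$, so $k$ is polynomially bounded in the input and the guessed $\scP'$ has polynomial size. With this reduction in place, every problem of Theorem \ref{theo:main} is decided by a polynomial-size guess followed by the NP verification of Theorem \ref{theo:good-for-games}, placing all of them in NP.
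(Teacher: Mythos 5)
Your proposal is correct and follows essentially the same route as the paper: the paper's (largely implicit) proof of this corollary is exactly to guess the small candidate automaton and validate the guess with the simulation-game argument of Theorem \ref{theo:good-for-games}, using that the given automaton is good-for-games by assumption (Problems 1--10) or by determinism (Problems 11--20). Your additional observations---that $k$ may be assumed smaller than the input's size so the guess stays polynomial even with $k$ in binary, and that \buchi\ and \cobuchi\ automata are special cases of parity automata---merely make explicit bookkeeping that the paper leaves implicit (though note that for a deterministic automaton the GFG witness is the $\nu$ induced by $\delta$, not an arbitrary $\nu$).
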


Note that this does not result in a test whether or not a given automaton is good-for-games, it merely allows, given a good-for-games automaton,
to validate that a second NPA is both: good-for-games \emph{and} language equivalent.

For \buchi\ \cite{DBLP:conf/fsttcs/BagnolK18} and \cobuchi\ automata \cite{DBLP:conf/icalp/KuperbergS15},
it is tractable to check whether or not an automaton is good-for-games.

\section{NP Hardness}

In this section we generalise the hardness argument for the minimality of deterministic \buchi\ and \cobuchi\ automata from \cite{Schewe/10/minimise}.
It lifts the reduction from the problem of finding a minimal vertex cover of a graph to the
minimisation of deterministic \buchi\ automata to a reduction to the minimisation of good-for-games automata.
The reduction first defines the characteristic language of a simple connected graph;
for technical convenience it assumes a distinguished initial vertex.

The states of a good-for-games \buchi\ automaton that recognises this characteristic language must satisfy side-constraints,
which imply that it has at least $2n+k$ states, where $n$ is the number of vertices of the graph, and $k$ is the size of its minimal vertex cover.
(For a given a vertex cover of size $k$, it is simple to construct a deterministic \buchi\ automaton of size $2n+k$
that recognises the characteristic language of this graph.)
Consequently, minimising the automaton defined by the trivial vertex cover can be used to determine a minimal vertex cover for this graph,
which concluded the reduction for deterministic automata.

We repeat the argument, showing that a minimal deterministic automaton for this language is also a minimal good-for-games automaton (when measured in the number of states).
As the automaton is deterministic and state minimal, it is also minimal in the size of the transition table.

Finally we show how to adjust the argument for minimal \cobuchi\ automata, which (different to deterministic automata,
where one can simply use the dual automaton) requires a small adjustment in the definition of the characteristic language for good-for-games automata.

Returning to the reduction known from deterministic automata, we call a non-trivial ($|V|>1$) simple undirected connected graph $\G_{v_0}=(V,E)$
with a distinguished initial vertex $v_0\in V$ \emph{nice}.
The restriction to nice graphs leaves the problem of finding a minimal vertex cover NP-complete.

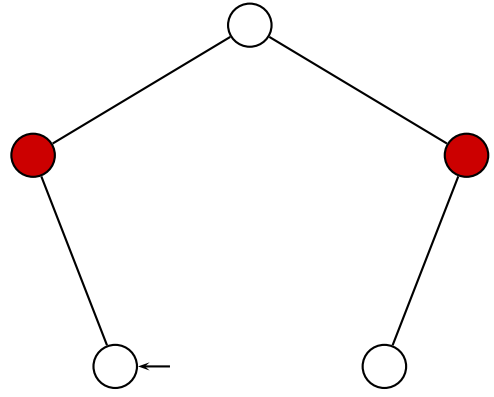
\begin{figure}

\begin{center}
\psset{xunit=30mm,yunit=25mm}
\begin{pspicture}(0,1.2)(0,-1)

\cnode(-0.59,-0.81){3mm}{1}
\cnode[fillstyle=solid,fillcolor=darkred](-0.95,0.31){3mm}{2}
\cnode(0,1){3mm}{3}
\cnode[fillstyle=solid,fillcolor=darkred](0.95,0.31){3mm}{4}
\cnode(0.59,-0.81){3mm}{5}

\pnode(-0.35,-0.81){0}

\ncline{->}{0}{1}
\ncline{-}{1}{2}
\ncline{-}{2}{3}
\ncline{-}{3}{4}
\ncline{-}{4}{5}

\end{pspicture}
\end{center}

\caption{%
A nice graph (a connected graph with a dedicated initial vertex) with a $2$ vertex cover (in red).
\emph{Is a nice graph $k$ coverable?} is a NP-complete problem.}
\label{fig:cover}
\end{figure}

\begin{lemma}
\cite{Schewe/10/minimise}\label{lem:NPC}
The problem of checking whether a nice graph $\G_{v_0}$ has a vertex cover of size $k$ is NP-complete.
\end{lemma}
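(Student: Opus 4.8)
The plan is to prove both directions of the NP-completeness claim. Membership in NP is immediate: a candidate cover $C\subseteq V$ serves as a polynomial-size certificate, and one verifies in polynomial time that $|C|\le k$ and that every edge of $E$ meets $C$. The substance is the hardness, for which I would reduce from the unrestricted Vertex Cover problem, which is NP-complete. The two extra demands of a \emph{nice} graph are connectivity and a distinguished vertex; since the existence of a vertex cover of a prescribed size does not depend on which vertex is singled out, the initial vertex $v_0$ may be chosen freely, so the real work is to enforce connectivity while keeping tight control over the cover number.

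Given an arbitrary simple graph $G=(V,E)$ with $n=|V|$ and a bound $k$, I would first separate out a degenerate regime. Every $n$-vertex simple graph admits a vertex cover of size $n-1$ (discard a single vertex; each remaining edge still has a covered endpoint), so whenever $k\ge n-1$ the instance is trivially positive and the reduction can emit a fixed yes-instance of the nice-graph problem. Assume therefore $k\le n-2$. Now form $\G_{v_0}=(V\cup\{w\},\,E\cup\{\{w,v\}\mid v\in V\})$ by adjoining a single fresh \emph{universal} vertex $w$ adjacent to every original vertex, and set $v_0:=w$. The result is simple, has $n+1\ge 2$ vertices, and is connected because every vertex is joined directly to $w$; hence it is nice. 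I claim $G$ has a vertex cover of size $k$ if and only if $\G_{v_0}$ has one of size $k+1$.

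The forward direction is easy: if $S$ covers $G$, then $S\cup\{w\}$ covers $\G_{v_0}$, since the original edges are handled by $S$ and every new star edge $\{w,v\}$ by $w$, giving a cover of size $|S|+1\le k+1$. For the converse, take a cover $S'$ of $\G_{v_0}$ with $|S'|\le k+1$. The only delicate point---and the step I expect to be the crux---is to rule out $w\notin S'$: were $w$ omitted, every star edge $\{w,v\}$ would have to be covered at its other endpoint $v$, forcing $V\subseteq S'$ and so $|S'|\ge n$, which contradicts $|S'|\le k+1\le n-1$. Hence $w\in S'$, and $S'\setminus\{w\}\subseteq V$ covers all original edges and has size at most $k$. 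This establishes the equivalence, so the polynomial-time map $(G,k)\mapsto(\G_{v_0},k+1)$, with the trivial shortcut for $k\ge n-1$, is a correct reduction.

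The main obstacle, flagged above, is exactly the connectivity demand: the components of $G$ must be glued together without perturbing the optimum in an uncontrolled way. The universal-vertex gadget resolves this because any cover of $\G_{v_0}$ of size at most $k+1$ is forced to contain $w$, as otherwise it would need all $n$ original vertices; the gadget therefore raises the optimum by exactly one, and the shift $k\mapsto k+1$ faithfully tracks the original cover number. The distinguished vertex, by contrast, costs nothing, since it plays no role in the vertex-cover condition.
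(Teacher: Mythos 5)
Your proposal is correct. Note, however, that the paper contains no proof of this lemma to compare against: it is imported wholesale from the cited reference [Sch10] (\cite{Schewe/10/minimise}), where the hardness of vertex cover on connected graphs with a distinguished vertex is established. Your argument is therefore a self-contained replacement rather than a variant of an in-paper proof. It is sound: NP-membership is the usual certificate check; the universal-vertex (cone) gadget enforces connectivity; choosing $v_0$ to be the new vertex handles the distinguished-vertex requirement for free; and the one delicate step---forcing $w$ into any small cover of the cone---is exactly where you need the precondition $k\le n-2$, which you correctly isolate by dispatching the regime $k\ge n-1$ (where $V\setminus\{v\}$ is always a cover) to a fixed yes-instance. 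The accounting $\mathrm{mVC}(\G_{v_0})=\min\bigl(\mathrm{mVC}(G)+1,\,n\bigr)$ behind your equivalence is right, and the map $(G,k)\mapsto(\G_{v_0},k+1)$ is clearly polynomial. The only cosmetic caveat is the usual ``size $k$'' versus ``size at most $k$'' convention, which you implicitly resolve in the monotone reading; this is harmless, since covers can be padded as long as $k+1\le n+1$. An alternative, equally standard route would be to observe that the textbook reductions to vertex cover already produce connected graphs (or to cite that fact), but your gadget has the advantage of being verifiable in three lines without appeal to the internals of any prior reduction.
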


We define the \emph{characteristic language} $\scL(\G_{v_0})$ of a nice graph $\G_{v_0}$ as the $\omega$-language over $V_\natural=V\uplus\{\natural\}$ (where $\natural$ indicates a stop of the evaluation in the next step---it can be read `stop') consisting of
\begin{enumerate}
\item all $\omega$-words of the form ${v_0}^*{v_1}^+ {v_2}^+ {v_3}^+ {v_4}^+ \ldots \in V^\omega$ with $\{v_{i-1},v_i\}\in E$ for all $i \in \mathbb N$, (words where $v_0,v_1,v_2,\ldots$ form an infinite path in $\G_{v_0}$), and 

\item all $\omega$-words that start with\footnote{this includes words that start with $\natural v_0$} ${v_0}^*{v_1}^+ {v_2}^+ \ldots {v_n}^+ \natural v_n \in {V_\natural}^*$ with $n \in \mathbb N_0$ and $\{v_{i-1},v_i\}\in E$ for all $i \in \mathbb N$.
(Words where $v_0,v_1,v_2,\ldots,v_n$ form a finite---and potentially trivial---path in $\G_{v_0}$, followed by a $\natural$ sign, followed by the last vertex of the path $v_0,v_1,v_2,\ldots,v_n$, and by $v_0$ if $\natural$ was the first letter.)
\end{enumerate}
We call the $\omega$-words in (1) \emph{trace-words}, and those in (2) \emph{$\natural$-words}. The trace-words are in $V^\omega$, while the $\natural$-words are in ${V_\natural}^\omega \setminus V^\omega$.

\begin{figure}[t]

\begin{center}
\psset{xunit=30mm,yunit=25mm}
\begin{pspicture}(0,1.2)(0,-1)

\cnode[fillstyle=solid,fillcolor=darkred](-0.59,-0.81){3mm}{1}
\cnode[fillstyle=solid,fillcolor=darkgreen](-0.95,0.31){3mm}{2}
\cnode[fillstyle=solid,fillcolor=darkblue](0,1){3mm}{3}
\cnode(0.95,0.31){3mm}{4}
\cnode[fillstyle=solid,fillcolor=darkyellow](0.59,-0.81){3mm}{5}
\pnode(-0.35,-0.81){0}

\cnode[fillstyle=solid,fillcolor=darkred](-0.4,-0.54){3mm}{1a}
\cnode[fillstyle=solid,fillcolor=darkgreen](-0.63,0.2){3mm}{2a}
\cnode[fillstyle=solid,fillcolor=darkblue](0,0.67){3mm}{3a}
\cnode(0.63,0.2){3mm}{4a}
\cnode[fillstyle=solid,fillcolor=darkyellow](0.4,-0.54){3mm}{5a}

\ncline{->}{0}{1}
\ncline{<->}{1}{2}
\ncline{<->}{2}{3}
\ncline{<->}{3}{4}
\ncline{<->}{4}{5}

\nccircle[angleA=180]{->}{1}{.235}
\nccircle[angleA=0]{->}{2}{.235}
\nccircle[angleA=-70]{->}{3}{.235}
\nccircle[angleA=0]{->}{4}{.235}
\nccircle[angleA=180]{->}{5}{.235}

\ncline{->}{1}{1a}
\naput{\tiny$\natural$}
\ncline{->}{2}{2a}
\naput{\tiny$\natural$}
\ncline{->}{3}{3a}
\naput{\tiny$\natural$}
\ncline{->}{4}{4a}
\naput{\tiny$\natural$}
\ncline{->}{5}{5a}
\naput{\tiny$\natural$}
\end{pspicture}
\end{center}

\caption{%
An automaton that accepts the $\natural$-words.
The different colours mark the vertices, and the colour of the outer vertices intuitively reflects the \emph{previous} colour/vertex seen, a value which is initialised to the colour of the dedicated initial vertex of the nice graph (in this case, {\color{darkred}\textbullet}).
If the automaton reads a vertex (here identified by its colour), which identifies either the current vertex or a vertex adjacent to it, it updates the stored vertex to the one it has read.
If it reads a different vertex that is not adjacent, it blocks (moves to $\bot$).\newline
When reading $\natural$, it moves to the inner vertex while keeping the stored colour/vertex.
From an inner vertex, it accepts (moves to $\top$) if it sees the stored vertex next, and blocks (moves to $\bot$) otherwise.\newline
A word  {\color{darkred}\textbullet\textbullet\textbullet
 \color{darkgreen}\textbullet\textbullet
 \color{darkred}\textbullet
 \color{darkgreen}\textbullet
 \color{darkblue}\textbullet
 \color{black}$\circ\circ$%
 \color{darkblue}\textbullet\textbullet
 {\color{black}$\natural$}\textbullet } \ldots, for example, is accepted, while the words
 {\color{darkred}\textbullet\textbullet\textbullet
 \color{darkgreen}\textbullet\textbullet
 \color{darkred}\textbullet
 \color{darkgreen}\textbullet
 \color{darkblue}\textbullet
 \color{black}$\circ\circ$%
 \color{darkblue}\textbullet\textbullet
 \color{black}$\natural$%
 \color{darkyellow}\textbullet } \ldots
 (wrong colour after $\natural$) and
 {\color{darkred}\textbullet\textbullet\textbullet
 \color{darkgreen}\textbullet\textbullet
 \color{darkred}\textbullet
 \color{darkgreen}\textbullet
 \color{darkblue}\textbullet
 \color{darkyellow}\textbullet
 \color{black}$\circ$%
 \color{darkblue}\textbullet\textbullet
 {\color{black}$\natural$}\textbullet } \ldots
 ({\color{darkblue}\textbullet} and
 {\color{darkyellow}\textbullet} are not adjacent) are rejected.
 }
\label{fig:natural}
\end{figure}

Let $\B$ be a parity good-for-games automaton that recognises the characteristic language of $\G_{v_0}=(V,E)$.
We call a state of $\B$
\begin{itemize}
\item a \emph{$v$-state} if it can be reached upon an input word ${v_0}^*{v_1}^+ {v_2}^+ \ldots {v_n}^+ \in {V}^*$, with $n \in \mathbb N_0$ and $\{v_{i-1},v_i\}\in E$ for all $i \in \mathbb N$, that ends in $v=v_n$ (in particular, the initial state of $\B$ is a $v_0$-state), and
\item a \emph{$v\natural$-state} if it can be reached from a $v$-state upon reading a $\natural$ sign.
\end{itemize}
We call the union over all $v$-states the set of \emph{vertex-states}, and the union over all $v\natural$-states the set of $\natural$-states.

\begin{lemma}
\label{lem:large}
Let $\mathcal G_{v_0}=(V,E)$ be a nice graph with initial vertex $v_0$, and let $\B=(V,Q,q_0,\delta,\pi)$ be a good-for-games parity automaton that recognises the characteristic language of $\G_{v_0}$.
Then the following holds.
\begin{enumerate}
 \item for all $v$ in $V$, there is a $v$-state from which all words that start with $\natural v$ are accepted---we call these states the \emph{core $v$-states};
 \item for all $v$ in $V$, there is a core $v$-state with an odd priority;
 \item for all $v \in V$ and $w \in V_\natural$ with $v\neq w$ and for every $v$-state $q_v$,
 words that start with $\natural w$ are not in the language of $\B_{q_v}$;

 \item for all $v$ in $V$, there is a $\natural v$-state from which all words that start with $v$ are accepted---we call these states the \emph{core $\natural v$-states};
 \item for all $v$ in $V$ and $w \in V_\natural$ with $v\neq w$ and for every $v$-state $q_{\natural v}$,
 words that start with $w$ are not in the language of $\B_{q_{\natural v}}$; and

 \item for every edge  $\{v,w\}\in E$, there is a $v$-state or a $w$-state with an even priority.

 \end{enumerate}
\end{lemma}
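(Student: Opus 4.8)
The plan is to exploit two complementary tools: the good-for-games strategy $\nu$ of $\B$ for the acceptance and priority claims (1, 2, 4, 6), and the mere \emph{reachability} definition of $v$-states and $\natural v$-states for the rejection claims (3, 5). Throughout I will use a handful of elementary facts about $\scL(\G_{v_0})$. Whenever $h \in V^*$ is a valid path prefix ending in $v$ (so that the $\nu$-run on $h$ reaches a $v$-state): (a) $h\,\natural v\,\alpha \in \scL(\G_{v_0})$ for every $\alpha$, because it is a $\natural$-word; (b) $h\,\natural w\,\alpha \notin \scL(\G_{v_0})$ whenever $w \neq v$, since after $h\,\natural$ only $v$ continues a $\natural$-word and no trace-word contains $\natural$; (c) $h\,v^\omega \notin \scL(\G_{v_0})$, because the graph is simple, so $vv$ is not an edge and a path cannot idle at $v$ forever; and (d) reading an adjacent vertex $w$ from a $v$-state reaches a $w$-state, while reading $v$ from a $v$-state reaches another $v$-state.

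For (1) I would fix a path $h$ from $v_0$ to $v$ (connectedness) and let $q$ be the state the $\nu$-run reaches after $h$, so $q$ is a $v$-state. Since $\nu$ is history-based, its run on any word $h\,\natural v\,\alpha$ shares the prefix ending in $q$, and as that word lies in $\scL(\G_{v_0})$ by fact (a) the $\nu$-run is accepting; hence its tail is an accepting run of $\B_q$ on $\natural v\,\alpha$. As $\alpha$ is arbitrary, $q$ is a core $v$-state. Claim (4) is the mirror image: read one further $\natural$ along the $\nu$-run from the core $v$-state produced for (1); the resulting $\natural v$-state accepts every word $v\,\alpha$ by the same in-language argument applied to $h\,\natural v\,\alpha$.

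Claim (2) refines (1). Using fact (d), reading $v$ along the $\nu$-run from a core $v$-state again lands in a core $v$-state, because the prolongations $h\,v^i\,\natural v\,\alpha$ remain $\natural$-words and so repeat the core-$v$ argument; therefore the entire $\nu$-run on $v^\omega$ started from the core $v$-state of (1) visits only core $v$-states. By fact (c) this run is rejecting, so the maximal priority occurring infinitely often is odd, and the infinitely-often-visited state realising it is the desired core $v$-state of odd priority. Claim (6) is the accepting analogue: for an edge $\{v,w\}$ take a path $h$ to $v$ and feed $h\,(wv)^\omega$, which is a trace-word and hence accepted; by fact (d) the tail of the $\nu$-run alternates between $v$-states and $w$-states, and being accepting its infinitely-often maximal priority is even, placing an even priority on some $v$-state or $w$-state.

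Claims (3) and (5) need no strategy at all: a $v$-state $q_v$ is reached by \emph{some} run on a valid path prefix $h$ ending in $v$, and a $\natural v$-state $q_{\natural v}$ by some run on $h\,\natural$ (I read the index in (5) as a $\natural v$-state, matching the notation $q_{\natural v}$). An accepting run of $\B_{q_v}$ on a word $\natural w\,\alpha$, or of $\B_{q_{\natural v}}$ on a word $w\,\alpha$, could then be prepended to that prefix to yield an accepting run on $h\,\natural w\,\alpha$, contradicting fact (b). The main obstacle is the bookkeeping in (2): one must verify that \emph{core}-ness, not merely $v$-ness, is preserved all along the $\nu$-run on $v^\omega$, which hinges on the history-based nature of $\nu$ together with every relevant prolongation $h\,v^i\,\natural v\,\alpha$ remaining a $\natural$-word. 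Once this is secured, the parity bookkeeping underlying (2) and (6)—locating the infinitely-often-maximal priority and reading off its parity—is routine.
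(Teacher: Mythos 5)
Your proposal is correct and follows essentially the same route as the paper's proof: claims (1), (2), (4), and (6) are established by running the good-for-games strategy $\nu$ on the same witness words (a path prefix $h$ extended by $\natural v\,\alpha$, by $v^\omega$, by $\natural$ then $v\,\alpha$, and by $(wv)^\omega$ respectively), while (3) and (5) follow by prepending a hypothetical accepting run to a reachability witness to contradict membership in $\scL(\G_{v_0})$. Your explicit bookkeeping that core-ness is preserved along the $\nu$-run on $v^\omega$ is exactly the point the paper compresses into its phrase ``after having read the first $n+1$ or more letters.''
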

\begin{figure}[t]

\begin{center}
\psset{xunit=30mm,yunit=25mm}
\begin{pspicture}(0,1.2)(0,-1)

\cnode[fillstyle=solid,fillcolor=darkred](-0.59,-0.81){3mm}{1}
\cnode[fillstyle=solid,fillcolor=darkgreen](-0.95,0.31){3mm}{2}
\cnode[fillstyle=solid,fillcolor=darkblue](0,1){3mm}{3}
\cnode(0.95,0.31){3mm}{4}
\cnode[fillstyle=solid,fillcolor=darkyellow](0.59,-0.81){3mm}{5}
\pnode(-0.35,-0.81){0}

\cnode[fillstyle=solid,fillcolor=darkred](-0.4,-0.54){3mm}{1a}
\cnode[fillstyle=solid,fillcolor=darkgreen](-0.63,0.2){3mm}{2a}
\cnode[fillstyle=solid,fillcolor=darkblue](0,0.67){3mm}{3a}
\cnode(0.63,0.2){3mm}{4a}
\cnode[fillstyle=solid,fillcolor=darkyellow](0.4,-0.54){3mm}{5a}

\cnode[fillstyle=solid,fillcolor=darkgreen,doubleline=true](-1.33,0.434){3mm}{2b}
\cnode[doubleline=true](1.33,0.434){3mm}{4b}

\ncline{->}{0}{1}
\ncline{<-}{1}{2}
\ncline{->}{2}{3}
\ncline{<-}{3}{4}
\ncline{->}{4}{5}

\ncarc[arcangle=40]{<->}{1}{2b}
\ncarc[arcangle=40]{<->}{2b}{3}
\ncarc[arcangle=40]{<->}{3}{4b}
\ncarc[arcangle=40]{<->}{4b}{5}

\ncline{->}{2b}{2}
\ncline{->}{4b}{4}

\nccircle[angleA=180]{->}{1}{.235}
\nccircle[angleA=0]{->}{2}{.235}
\nccircle[angleA=-70]{->}{3}{.235}
\nccircle[angleA=0]{->}{4}{.235}
\nccircle[angleA=180]{->}{5}{.235}

\ncline{->}{1}{1a}
\naput{\tiny$\natural$}
\ncline{->}{2}{2a}
\naput{\tiny$\natural$}
\ncline{->}{3}{3a}
\naput{\tiny$\natural$}
\ncline{->}{4}{4a}
\naput{\tiny$\natural$}
\ncline{->}{5}{5a}
\naput{\tiny$\natural$}

\ncarc[arcangle=-40]{->}{2b}{2a}
\nbput{\tiny$\natural$}
\ncarc[arcangle=40]{->}{4b}{4a}
\naput{\tiny$\natural$}
\end{pspicture}
\end{center}

\caption{%
An automaton that also accepts the trace-words.
For a nice graph  $\mathcal G_{v_0}=(V,E)$, it needs $|V|$ states reached after reading (the first) $\natural$,
$|V|$ non-final states reachable prior to reading the first $\natural$, and, broadly speaking, sufficiently many final states, such that they form a cover.
The automaton shown here is defined by the cover shown in Figure \ref{fig:cover}.}
\label{fig:trace}
\end{figure}
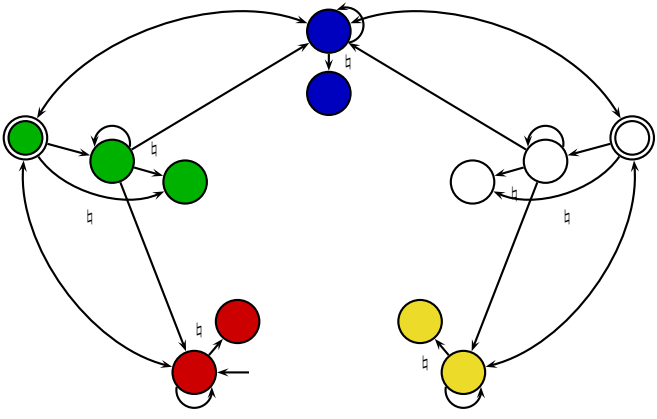
\begin{proof}
\begin{enumerate}
\item Let $v=v_n$ and let $v_0,v_1,v_2,...,v_n$ be a path in $\G_{v_0}$.
As $\B$ recognises $\scL( \G_{v_0})$ and is good-for-games, it must, after having read the first $n+1$ or more letters of an input word $v_0,v_1,v_2,...,{v_n}^\omega$ (using its good-for-games strategy $\nu$),
with $\{v_i,v_{i+1}\} \in E$ for all $i<n$, be in a core $v$-state, as words that start with this and continue with $\natural v$ are in $\scL( \G_{v_0})$.

\item Furthermore, the run $\B$ produces (using $\nu$) for $v_0,v_1,v_2,...,{v_n}^\omega$ has a dominating priority determined by its tail of
core $v$-states, and the core $v$-state with the highest priority that occurs infinitely many times must have an odd priority
(as the word is not $\scL( \G_{v_0})$).
Consequently, there must be at least one core $v$-state with an odd priority.

\item If (3) does not hold, a witness would provide a word accepted by $\B$ but not in  $\scL( \G_{v_0})$.

\item Let $v=v_n$ and let $v_0,v_1,v_2,...,v_n$ be a path in $\G_{v_0}$.
As $\B$ recognises $\scL( \G_{v_0})$ and is good-for-games, it must, after having read the first $n+2$ letters of an input word that starts with $v_0,v_1,v_2,...,v_n,\natural,^\omega$ (using its good-for-games strategy $\nu$),  with $\{v_i,v_{i+1}\} \in E$ for all $i<n$, be in a core $\natural v$-state, as words that start with this and continue with $v$ are in $\scL( \G_{v_0})$.

\item If (5) does not hold, a witness would provide a word accepted by $\B$ but not in  $\scL( \G_{v_0})$.

\item Let us consider an arbitrary edge $\{v,w\} \in E$, $v=v_n$, and the run of $\B$ (following $\nu$) on
$v_0,v_1,v_2,\ldots, v_n, (w,v)^\omega$ in $\scL(\G_{v_0})$ (i.e.\ for all $i<n.\ \{v_i,v_{i+1}\}\in E$).

The run must be accepting, and, as argued in (1), once the word alternates between $v$ and $w$, the run alternates between core $v$-states and core $w$-states.
Thus, the core $v$-state \emph{or} the core $w$-state with the highest priority that occurs infinitely often must have an even priority.
\end{enumerate}
\end{proof}

The sixth claim implies that the set $C$ of vertices with a core vertex-state with even priority is a vertex cover of $\mathcal G_{v_0}=(V,E)$.
Thus, $\B$ has at least $|C|$ core vertex states with an even priority.
(1--3) provide that $\B$ has at least $|V|$ vertex-states with odd priority, and it follows
with (4+5) that there are $|V|$ core $\natural$-states that are disjoint from the core vertex-states:

\begin{corollary}
\label{cor:atLeast}
For a good-for-games parity automaton $\B=(V,Q,q_0,\delta,\pi)$ that recognises the characteristic language of a nice graph $\mathcal G_{v_0}=(V,E)$ with initial vertex $v_0$,
the set $C=\{v\in V \mid$ there is a $v$-state with an even priority$\}$ is a vertex cover of $\mathcal G_{v_0}$, and $\B$ has at least $2|V| + |C|$ states.
\qed
\end{corollary}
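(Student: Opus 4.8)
The plan is to pin down $2|V|+|C|$ states of $\B$ that are pairwise distinct, organised into three mutually disjoint groups: an odd-priority core $v$-state for each $v\in V$, an even-priority core $v$-state for each $v\in C$, and a core $\natural v$-state for each $v\in V$. I read $C$ as the set of vertices admitting a core vertex-state of even priority, exactly as in the paragraph preceding the statement; claim~(6), whose proof in fact produces \emph{core} $v$- or $w$-states, shows that for every edge $\{v,w\}\in E$ at least one endpoint lies in $C$, so $C$ is a vertex cover of $\G_{v_0}$.

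The engine for distinctness is the acceptance signature each chosen state carries, read directly off Lemma~\ref{lem:large}. A core $v$-state accepts every word beginning $\natural v$ by claim~(1), yet, being a $v$-state, rejects every word beginning $\natural w$ with $w\neq v$ by claim~(3). A core $\natural v$-state accepts every word beginning with the letter $v$ by claim~(4), and by claim~(5) rejects every word beginning with a letter $w\in V_\natural\setminus\{v\}$; specialising claim~(5) to $w=\natural$ shows that it rejects \emph{every} word that begins with $\natural$.

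These signatures give the three separations. First, core $v$-states and core $w$-states are disjoint for $v\neq w$, since one accepts $\natural v\ldots$ while the other rejects it; combined with the fact that even and odd priorities cannot coincide in a single state, the $|V|$ chosen odd vertex-states and the $|C|$ chosen even vertex-states are all distinct, contributing $|V|+|C|$ states. Second, every core $\natural$-state rejects all words beginning with $\natural$, whereas every core vertex-state accepts the prefix $\natural v$; hence no core vertex-state is a $\natural$-state, so the $\natural$-group is disjoint from the vertex-group. Third, core $\natural v$-states and core $\natural w$-states are disjoint for $v\neq w$ by claim~(5) (one accepts words beginning $v$, the other rejects them), so the $\natural$-group contributes a further $|V|$ distinct states. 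Adding the three disjoint groups yields $|V|+|C|+|V|=2|V|+|C|$ states.

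The main obstacle is less any individual separation than the discipline of always working with \emph{core} states, so that the precise accept/reject clauses of Lemma~\ref{lem:large} are available, and then checking that the three separations jointly exclude every possible coincidence: between different vertices inside one group, between the even and odd vertex-states, and across the vertex/$\natural$ boundary. The cleanest single lever is claim~(5) instantiated at the letter $\natural$, which in one step quarantines all $\natural$-states from any continuation starting with $\natural$, and thereby from the entire vertex-group.
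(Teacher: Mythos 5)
Your proof is correct and takes essentially the same route as the paper, whose own justification is the terse paragraph preceding the corollary: claim~(6) of Lemma~\ref{lem:large} makes $C$ (read via \emph{core} vertex-states) a vertex cover, claims~(1--3) give $|V|$ odd-priority core vertex-states, claim~(6) gives $|C|$ even-priority ones, and claims~(4--5) give $|V|$ core $\natural$-states disjoint from all vertex-states. Your explicit handling of the core/non-core distinction and the use of claim~(5) instantiated at the letter $\natural$ to separate the $\natural$-group are exactly the details the paper leaves implicit.
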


It is not hard to define, for a given nice graph $\mathcal G_{v_0}=(V,E)$ with vertex cover $C$, a \emph{deterministic} \buchi\ automaton
$\B^{\G_{v_0}}_C= (V_\natural,(V\times\{n,\natural\}) \uplus (C \times \{f\}),(v_0,n),\bar{\delta},(C\times\{f\})\uplus \{\top\})$
with $2|V|+|C|$ states that recognises the characteristic language of $\mathcal G_{v_0}$ \cite{Schewe/10/minimise}.
(The $n$ and $f$ in the state refer to non-final and final, respectively.)
We simply choose
\begin{itemize}
\item $\bar{\delta}\big((v,n),v'\big) = (v',f)$ if $\{v,v'\} \in E$ and $v'\in C$,

$\bar{\delta}\big((v,n),v'\big) = (v',n)$ if $\{v,v'\} \in E$ and $v'\notin C$,

$\bar{\delta}\big((v,n),v'\big) = (v,n)$ if $v=v'$,

$\bar{\delta}\big((v,n),v'\big) = (v,\natural)$ if $v'=\natural$, and

$\bar{\delta}\big((v,n),v'\big)=\bot$ otherwise;

\item $\bar{\delta}\big((v,f),v'\big) = \bar{\delta}\big((v,n),v'\big)$, and
\item $\bar{\delta}\big((v,\natural),v\big) = \top$ and $\bar{\delta}\big((v,\natural),v'\big) = \bot$ for $v' \neq v$.
\end{itemize}

$\B^{\G_{v_0}}_C$ simply has one $v\natural$-state for each vertex $v\in V$ of $\mathcal G_{v_0}$, one accepting $v$-state for each vertex in the vertex cover $C$, and one rejecting $v$-vertex for each vertex $v\in V$ of $\mathcal G_{v_0}$.
It moves to the accepting copy of a vertex state $v$ only upon taking an edge to $v$, but not on a repetition of~$v$.

\begin{lemma}
\cite{Schewe/10/minimise}
\label{lem:correct}
For a nice graph $\G_{v_0}=(V,E)$ with initial vertex $v_0$ and vertex cover $C$, the \buchi\ automaton $\B^{\G_{v_0}}_C$ recognises the characteristic language of $\mathcal G_{v_0}$.
\end{lemma}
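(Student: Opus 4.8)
The plan is to exploit determinism: since $\B^{\G_{v_0}}_C$ is deterministic with the rejecting sink $\bot$ and the accepting sink $\top$, every word has a unique run, and I would first pin down its shape by an invariant. Reading a letter from a vertex-state $(v,n)$ or $(v,f)$ keeps the automaton on the walk it is currently tracing: the letter $v$ (a repetition) returns to $(v,n)$, an adjacent letter $v'$ advances along the edge $\{v,v'\}$ to $(v',f)$ if $v'\in C$ and to $(v',n)$ otherwise, the letter $\natural$ moves to $(v,\natural)$, and every other letter blocks. From $(v,\natural)$ only the letter $v$ survives, and it leads straight to $\top$. A routine induction on the length of the read prefix then shows that after any non-blocking finite prefix consisting only of vertex-letters the state is $(v,\cdot)$, where $v$ is the last vertex of the unique walk $v_0,v_1,\dots$ that the prefix spells out in $\G_{v_0}$.

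With this invariant in place I would split the surviving words into two shapes. Either the run reads a $\natural$ at some point---then it sits in some $(v,\natural)$ and can only continue non-blockingly with the letter $v$, reaching $\top$ and accepting; unwinding the invariant shows the prefix is exactly ${v_0}^*{v_1}^+\cdots{v_m}^+\natural v_m$, i.e.\ a $\natural$-word. Or the run never reads $\natural$ and stays forever among the vertex-states, in which case the surviving word is precisely an infinite walk ${v_0}^*{v_1}^+{v_2}^+\cdots$ with consecutive vertices adjacent. Conversely, every $\natural$-word drives the run to $\top$ and is accepted, and every infinite walk keeps the run inside the vertex-states. This already matches the two clauses of $\scL(\G_{v_0})$ up to the acceptance test that the \buchi\ condition still imposes in the walk case.

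The crux, and the step I expect to cost the most care, is to show that the \buchi\ condition on the vertex-state walks selects exactly the trace-words. A final state $(v',f)$ is entered only on a fresh edge-step into a cover vertex, never on a repetition, so the number of accepting visits equals the number of edge-steps into $C$. Here the hypothesis that $C$ is a \emph{vertex cover} does the work: for a genuine infinite path the walk performs infinitely many edge-steps, and since every edge meets $C$, the walk cannot avoid $C$ from some point on---two consecutive off-cover vertices would witness an uncovered edge---so infinitely many of its vertices lie in $C$ and the run is accepting. A walk that is eventually constant, of the form $\cdots v_m^\omega$, makes only finitely many edge-steps, hence only finitely many final visits, and is correctly rejected, consistently with such a word being neither a trace-word nor a member of $\scL(\G_{v_0})$. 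Combining the two shapes, the accepted words are exactly the $\natural$-words together with the trace-words, which yields $\scL(\B^{\G_{v_0}}_C)=\scL(\G_{v_0})$.
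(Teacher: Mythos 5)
Your proof is correct, and it follows essentially the same route the paper takes for the analogous \cobuchi\ lemma (Lemma \ref{lem:cocorrect}): exploit determinism to pin down the unique run, split the analysis into trace-words and $\natural$-words, and use the vertex-cover property to show that edge-steps enter final states infinitely often exactly when the word traces an infinite path. Note that the paper itself gives no proof of Lemma \ref{lem:correct}, deferring it to \cite{Schewe/10/minimise}; your argument supplies exactly the \buchi\ counterpart of the in-paper proof of Lemma \ref{lem:cocorrect}, including the key observation that a repetition returns to the non-final copy, so final visits count precisely the fresh edge-steps into $C$.
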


Corollary \ref{cor:atLeast} and Lemma \ref{lem:correct} immediately imply:

\begin{corollary}
\label{cor:reduce}
Let $C$ be a minimal vertex cover of a nice graph $\mathcal G_{v_0}=(V,E)$. Then $\B^{\G_{v_0}}_C$ is a minimal deterministic \buchi\ automaton
that recognises the characteristic language of $\mathcal G_{v_0}$,
and there is no good-for-games parity automaton with less states than  $\B^{\G_{v_0}}_C$ that recognises the same language.
\qed
\end{corollary}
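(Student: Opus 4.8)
The plan is to combine the lower bound of Corollary~\ref{cor:atLeast} with the exact size of the explicit construction, using the minimality of $C$ to close the gap. First I would record that, by construction, $\B^{\G_{v_0}}_C$ has exactly $2|V|+|C|$ states: one $v\natural$-state for each $v\in V$, one rejecting (non-final) $v$-state for each $v\in V$, and one accepting (final) $v$-state for each $v$ in the cover $C$. By Lemma~\ref{lem:correct}, this automaton recognises the characteristic language of $\mathcal G_{v_0}$. Writing $k=|C|$ for the size of the given \emph{minimal} vertex cover, the automaton $\B^{\G_{v_0}}_C$ therefore recognises the language using precisely $2|V|+k$ states.

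Next I would invoke Corollary~\ref{cor:atLeast} to supply a matching lower bound. Any good-for-games parity automaton $\B$ recognising the same language yields, via that corollary, a vertex cover $C'$ of $\mathcal G_{v_0}$ (the vertices possessing a core vertex-state of even priority), together with the guarantee that $\B$ has at least $2|V|+|C'|$ states. Since $k$ is the size of a minimal vertex cover and $C'$ is a vertex cover, we have $|C'|\ge k$, and hence $\B$ has at least $2|V|+k$ states. Consequently $\B^{\G_{v_0}}_C$ already attains this bound, so it is a minimal good-for-games parity automaton for the characteristic language, which is exactly the second assertion of the corollary.

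Finally I would derive the first assertion from the second by a subclass argument. A deterministic \buchi\ automaton is a special case of a good-for-games parity automaton: determinism restricts the transition function to singletons and is trivially good-for-games (the witnessing $\nu$ just follows the unique successor), while a \buchi\ condition is a parity condition with priorities in $\{1,2\}$. The $2|V|+k$ lower bound therefore also holds within the narrower class of deterministic \buchi\ automata, and since $\B^{\G_{v_0}}_C$ is itself such an automaton meeting the bound, it is minimal there as well. There is no genuine obstacle in this corollary, as all the substantive work was carried out in Corollary~\ref{cor:atLeast} and Lemma~\ref{lem:correct}; the only point demanding a moment's care is confirming that the class of good-for-games parity automata really contains the deterministic \buchi\ automata, so that the stronger lower bound transfers faithfully to the deterministic setting.
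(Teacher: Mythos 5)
Your proposal is correct and matches the paper's argument exactly: the paper derives this corollary immediately from Corollary~\ref{cor:atLeast} (the $2|V|+|C'|$ lower bound for any GFG parity automaton, combined with minimality of $C$) and Lemma~\ref{lem:correct} (the matching $2|V|+|C|$-state construction). Your explicit note that deterministic \buchi\ automata form a subclass of GFG parity automata, so the lower bound transfers, is precisely the step the paper leaves implicit.
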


We change the characteristic language to the \emph{adjusted language}  $\scL'(\G_{v_0})$ of a nice graph $\G_{v_0}$
as the $\omega$-language over $V_\natural=V\uplus\{\natural\}$ that consists of
\begin{enumerate}
\item all $\omega$-words of the form ${v_0}^*{v_1}^+ {v_2}^+ {v_3}^+ {v_4}^+ \ldots {v_n}^\omega \in V^\omega$ with $\{v_i,v_{i+1}\}\in E$ for all $i < n$, (words where $v_0,v_1,v_2,\ldots,v_n$ form a finite (possibly trivial) path in $\G_{v_0}$, and 

\item all $\omega$-words that start with\footnote{this includes words that start with $\natural v_0$} ${v_0}^*{v_1}^+ {v_2}^+ \ldots {v_n}^+ \natural v_n \in {V_\natural}^*$ with $n \in \mathbb N_0$ and $\{v_{i-1},v_i\}\in E$ for all $i \in \mathbb N$.
(Words where $v_0,v_1,v_2,\ldots,v_n$ form a finite---and potentially trivial---path in $\G_{v_0}$, followed by a $\natural$ sign, followed by the last vertex of the path $v_0,v_1,v_2,\ldots,v_n$, and by $v_0$ if $\natural$ was the first letter.)
\end{enumerate}

\begin{lemma}
\label{lem:colarge}
Let $\mathcal G_{v_0}=(V,E)$ be a nice graph with initial vertex $v_0$, and let $\B=(V,Q,q_0,\delta,\pi)$ be a good-for-games parity automaton that recognises the adjusted language $\scL'(\G_{v_0})$ of  $\G_{v_0}$.
Then the following holds.
\begin{enumerate}
 \item for all $v$ in $V$, there is a $v$-state from which all words that start with $\natural v$ are accepted---we call these states the \emph{core $v$-states};
 \item for all $v$ in $V$, there is a core $v$-state with an \textbf{even} priority;
 \item for all $v \in V$ and $w \in V_\natural$ with $v\neq w$ and for every $v$-state $q_v$,
 words that start with $\natural w$ are not in the language of $\B_{q_v}$;

 \item for all $v$ in $V$, there is a $\natural v$-state from which all words that start with $v$ are accepted---we call these states the \emph{core $\natural v$-states};
 \item for all $v$ in $V$ and $w \in V_\natural$ with $v\neq w$ and for every $v$-state $q_{\natural v}$,
 words that start with $w$ are not in the language of $\B_{q_{\natural v}}$; and

 \item for every edge  $\{v,w\}\in E$, there is a $v$-state or a $w$-state with an \textbf{odd} priority.

 \end{enumerate}
\end{lemma}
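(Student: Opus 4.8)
The plan is to mirror the proof of Lemma~\ref{lem:large} as closely as possible, exploiting the observation that the adjusted language $\scL'(\G_{v_0})$ and the original characteristic language $\scL(\G_{v_0})$ differ \emph{only} in their trace-words: the $\natural$-words (item~2) are defined identically in both. Consequently I expect claims (1), (3), (4), and (5)---all of which concern only $\natural$-words and their acceptance---to transfer verbatim, and only claims (2) and (6) to require attention, where the change amounts to a single parity flip.

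First I would re-establish the ``core state'' machinery of Lemma~\ref{lem:large}(1,4) unchanged. For any finite path $v_0, v_1, \ldots, v_n$ in $\G_{v_0}$ ending in $v = v_n$, every extension of $v_0 v_1 \cdots v_n \natural v_n$ by an arbitrary suffix is a $\natural$-word and hence lies in $\scL'(\G_{v_0})$. Since $\B$ is good-for-games, the run it produces via its strategy $\nu$ on such a word is accepting, and the state reached after the prefix $v_0 v_1 \cdots v_n$ depends only on that history; this state must therefore be one from which all words starting $\natural v$ are accepted, i.e.\ a core $v$-state. The same argument after one further $\natural$ step yields the core $\natural v$-states, giving (4). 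Claims (3) and (5) are, just as in Lemma~\ref{lem:large}, immediate: a violating witness would be a word accepted by $\B$ yet outside $\scL'(\G_{v_0})$. None of this uses the trace-words, so it holds word for word.

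The real content is the parity of (2) and (6), and here the plan is to exploit exactly the two word families whose membership is toggled by the adjustment. For (2) I would feed $\B$ the word $v_0 v_1 \cdots {v_n}^\omega$ that settles at $v = v_n$. In the original language this word was rejected (forcing odd in Lemma~\ref{lem:large}(2)); under the adjusted definition it is a trace-word of $\scL'(\G_{v_0})$ and hence \emph{accepted}. By the core-state fact above, the tail of the $\nu$-run on it consists entirely of core $v_n$-states, so its dominating (highest infinitely often) priority is attained on core $v_n$-states and, acceptance now being required, is \emph{even}; thus some core $v_n$-state carries an even priority. Dually, for (6) I would fix an edge $\{v,w\} \in E$ with $v = v_n$ and consider the alternating word $v_0 v_1 \cdots v_n (w v)^\omega$. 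This was an infinite path, and thus accepted, in $\scL(\G_{v_0})$; but it settles at no single vertex and contains no $\natural$, so it is \emph{rejected} by $\scL'(\G_{v_0})$. Its $\nu$-run alternates between core $v$-states and core $w$-states, so its dominating priority---now odd, since the word is rejected---is realised on some core $v$-state or core $w$-state, yielding the odd priority claimed.

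The only genuine obstacle, and the one step I would check with care, is the membership bookkeeping: verifying that $v_0 \cdots {v_n}^\omega \in \scL'(\G_{v_0})$, that $v_0 \cdots v_n (wv)^\omega \notin \scL'(\G_{v_0})$, and---crucially---that no \emph{other} relevant words change status between $\scL$ and $\scL'$, so that the core-state arguments (which depend only on the common $\natural$-words) survive untouched. Once this is confirmed, the two parity flips follow mechanically from the same good-for-games run analysis used in Lemma~\ref{lem:large}.
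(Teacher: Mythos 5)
Your proposal is correct and takes essentially the same route as the paper: the paper's own proof of Lemma~\ref{lem:colarge} consists of the single remark that one repeats the proof of Lemma~\ref{lem:large} with even and odd swapped, which is precisely your argument---claims (1), (3), (4), (5) transfer verbatim because the $\natural$-words are identical in $\scL(\G_{v_0})$ and $\scL'(\G_{v_0})$, while the words ${v_0}^*{v_1}^+\cdots{v_n}^\omega$ and $v_0v_1\cdots v_n(wv)^\omega$ flip their membership status, flipping the parities in (2) and (6). Your explicit bookkeeping of which words change status is in fact more careful than the paper's one-line justification.
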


The changes in the proof compared to Lemma \ref{lem:large} are simply to replace even and odd accordingly.

With the same argument as before we get the same corollary:

\begin{corollary}
\label{cor:coatLeast}
For a good-for-games parity automaton that recognises the adjusted language of a nice graph $\mathcal G_{v_0}=(V,E)$ with initial vertex $v_0$,
the set $C=\{v\in V \mid$ there is a $v$-state with an even priority$\}$ is a vertex cover of $\mathcal G_{v_0}$,
and $\B$ has at least $2|V| + |C|$ states.
\end{corollary}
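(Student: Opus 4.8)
The plan is to replay the proof of Corollary~\ref{cor:atLeast}, this time feeding it Lemma~\ref{lem:colarge} in place of Lemma~\ref{lem:large}. Since the step from Lemma~\ref{lem:large} to Lemma~\ref{lem:colarge} merely interchanges the two priorities named in clauses~(2) and~(6) and leaves everything at the level of states untouched, each counting and disjointness argument behind Corollary~\ref{cor:atLeast} transfers verbatim, and the bound it yields is again $2|V|+|C|$.

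Concretely, I would isolate three pairwise-disjoint groups of states and sum their sizes. Clause~(6) of Lemma~\ref{lem:colarge} supplies, for each edge $\{v,w\}$, an endpoint carrying a core vertex-state whose priority is the one dominating the rejected alternating words of the adjusted language; the set $C$ of these endpoints is a vertex cover, and $\B$ contains at least $|C|$ such core vertex-states---the direct counterpart of the cover read off clause~(6) in Corollary~\ref{cor:atLeast}. Clauses~(1)--(3) next give, for every $v\in V$, a core $v$-state of the complementary priority; clause~(3) forces these $|V|$ states to be pairwise distinct and, carrying the other priority, disjoint from the $|C|$ cover-states. Clauses~(4)--(5) give, for every $v\in V$, a core $\natural v$-state; clause~(5) separates these $|V|$ states from one another (such a state accepts $v\ldots$ but rejects $u\ldots$ for $u\neq v$) and, instantiated at $w=\natural$, shows each rejects every $\natural$-prefixed word, whereas every core vertex-state accepts $\natural v\ldots$, so the $\natural$-states are disjoint from all vertex-states. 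Adding the three blocks gives $|C|+|V|+|V|=2|V|+|C|$.

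The only substantive step is this disjointness bookkeeping---checking that the cover-states, the complementary core vertex-states, and the $\natural$-states really are three separate blocks. As in Corollary~\ref{cor:atLeast}, it is carried entirely by the acceptance side-conditions~(3) and~(5) of Lemma~\ref{lem:colarge}, which are blind to the priority interchange; that blindness is precisely why the count $2|V|+|C|$ is unaffected when one passes from the characteristic to the adjusted language.
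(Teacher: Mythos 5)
Your proposal is correct and matches the paper's proof, which is literally just ``with the same argument as before''---i.e.\ replaying the counting behind Corollary~\ref{cor:atLeast} (the $|C|$ cover-states, the $|V|$ core vertex-states of complementary parity made distinct by clause~(3), and the $|V|$ core $\natural$-states separated by clause~(5)) with the two parities interchanged. Note that you rightly take the cover $C$ to consist of the endpoints carrying the \emph{odd}-priority states supplied by clause~(6) of Lemma~\ref{lem:colarge}; the printed statement of the corollary still says ``even priority'', an evident carry-over typo from Corollary~\ref{cor:atLeast}, and your reading is the one the argument actually supports.
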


\begin{lemma}
\label{lem:cocorrect}
For a nice graph $\G_{v_0}=(V,E)$ with initial vertex $v_0$ and vertex cover $C$, the
\cobuchi\ automaton%
\footnote{The automaton is the same as before, but read as a \cobuchi\ automaton.}
$\B^{\G_{v_0}}_C$ recognises the adjusted language of $\mathcal G_{v_0}$.
\end{lemma}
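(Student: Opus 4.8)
The plan is to exploit that $\B^{\G_{v_0}}_C$ is deterministic, so every input has a unique run and acceptance is read off directly from that run, and to observe that the sole difference between the \buchi\ reading (Lemma \ref{lem:correct}) and the \cobuchi\ reading lies in the recurrence condition on the final states $C\times\{f\}$. Under the \cobuchi\ reading a run is accepting iff it reaches the accepting sink $\top$, or it avoids the rejecting sink $\bot$ and visits $C\times\{f\}$ only \emph{finitely} often, whereas the \buchi\ reading accepted iff the run reached $\top$ or visited $C\times\{f\}$ \emph{infinitely} often. I would therefore split each run into three mutually exclusive behaviours and match each to membership in $\scL'(\G_{v_0})$.

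First I would dispose of the two cases insensitive to the \buchi/\cobuchi\ distinction. If the run reaches $\bot$, the input has violated an adjacency constraint (a non-edge step in $V^*$, or a letter other than $v$ read in a $v\natural$-state), so it lies outside $\scL'(\G_{v_0})$ and is correctly rejected by both readings. If the run reaches $\top$, the input has the shape (valid path)$\,\natural\,v_n$ with $v_n$ the last vertex of the path; since $\top$ is an accepting sink the word is accepted, and these are precisely the $\natural$-words, which are shared verbatim between the characteristic and the adjusted language. Hence on words that reach a sink the \cobuchi\ reading behaves exactly as the \buchi\ reading analysed in Lemma \ref{lem:correct}.

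The remaining case is a run staying forever among the vertex-states $(v,n)$ and $(v,f)$; here the input is a sequence in $V^\omega$ all of whose steps respect adjacency (a valid walk), and this is the only place the two readings diverge. The crux is the equivalence, for such a valid walk, between \emph{the walk eventually repeats a single vertex forever} and \emph{the run enters $C\times\{f\}$ only finitely often}. The easy direction is that once the walk stabilises on $v_n$, every further letter is a repetition and thus lands in the non-final copy $(v_n,n)$, so $C\times\{f\}$ is never seen again and the \cobuchi\ run accepts. The hard part will be the converse: a genuinely infinite (non-stabilising) walk must enter $C\times\{f\}$ infinitely often, and this is exactly where the vertex-cover property of $C$ is used, just as in the \buchi\ argument. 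Since $C$ covers $E$, no edge has both endpoints outside $C$, so two consecutive distinct vertices along the walk cannot both avoid $C$; consequently the walk steps \emph{into} a vertex of $C$ infinitely often, and each such step is a transition into some $(v',f)\in C\times\{f\}$. Thus an infinite walk visits $C\times\{f\}$ infinitely often and is \cobuchi-rejected (correctly, as it is not in $\scL'(\G_{v_0})$), while a stabilising walk visits it finitely often and is accepted.

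Combining the three behaviours, the \cobuchi\ reading of $\B^{\G_{v_0}}_C$ accepts exactly the stabilising valid walks in $V^\omega$ together with the $\natural$-words, which is by definition $\scL'(\G_{v_0})$. I would close by noting that this is the mirror image of Lemma \ref{lem:correct}: the $\top$-sink pins the $\natural$-words down identically across both readings, while flipping ``infinitely often'' to ``finitely often'' on $C\times\{f\}$ trades the infinite walks of the characteristic language for the stabilising walks of the adjusted language. \qed
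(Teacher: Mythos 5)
Your proposal is correct and follows essentially the same route as the paper's proof: the only difference is cosmetic organisation (you split by run behaviour---reaching $\bot$, reaching $\top$, or staying among vertex-states---whereas the paper splits first into trace-words and $\natural$-words and then into blocking/non-blocking and finite/infinite sets of transition indices), but the substance coincides. In particular, your key step for non-stabilising walks---that the vertex cover $C$ forces infinitely many transitions into $C\times\{f\}$, while stabilising walks end in a tail of $(v,n)$ states---is exactly the paper's argument via consecutive transition indices.
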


\begin{proof}
We argue separately for trace-words and $\natural$-words accepted by $\B^{\G_{v_0}}_C$ are exactly the trace-words and $\natural$-words in $\scL'(\mathcal G_{v_0})$

For a \emph{trace-word} $\alpha=v_1v_2v_3\ldots \in V^\omega$, $\B^{\G_{v_0}}_C$ has the run $(v_0,n)(v_1,x_1)(v_2,x_2)(v_3,x_3)\ldots$
(with $x_i \in \{n,f\}$ for all $i \in \mathbb N$) if, for all $i\in \mathbb N$, either 
$v_{i-1}=v_i$ or $\{v_{i-1},v_i\}\in E$ holds; otherwise the automaton blocks (has a tail of $\bot$ states) at $i_{\min}$-th letter, where $i_{\min}$ is the
minimal $i$ such that $v_{i-1}\neq v_i$ and $\{v_{i-1},v_i\}\notin E$.
A trace-word where the automaton blocks is rejected by $\B^{\G_{v_0}}_C$ and not in $\scL'(\G_{v_0})$.

We now consider those trace-words, for which $\B^{\G_{v_0}}_C$ does not block.
For these words, we call the set $I = \{ i \in \mathbb N \mid \{v_{i-1},v_i\} \in E\big\}$ transition indices.
Now $\alpha \in \scL'(\G_{v_0})$ holds if, and only if, $I$ is finite.
If $I$ is finite, we call its maximal element $i_{\max}$, and set $i_{\max}$ to $0$ if $I$ is empty.
The run of $\B^{\G_{v_0}}_C$ on $\alpha$ is then
$(v_0,n)(v_1,x_1)\ldots(v_{i_{\max}-1},x_{i_{\max}-1})(v_{i_{\max}},x_{i_{\max}})(v_{i_{\max}},n)^\omega$;
it has a tail of non-final states $(v_{i_{\max}},n)$, and $\alpha$ is therefore accepted by $\B^{\G_{v_0}}_C$.

If $I$ is infinite, we use the infinite ascending chain $i_1 < i_2 < i_3 < \ldots$ with $I = \{i_n \mid n \in \mathbb N\}$.
Then, for all $k\in \mathbb N$, $v_{i_k-1} \neq v_{i_k} = v_{i_{k+1}-1} \neq v_{i_{k+1}}$ holds and $\{v_{i_k},v_{i_{k+1}}\} \in E$.
$\{v_{i_k},v_{i_{k+1}}\} \in E$ entails that the cover $C$ must contain $v_{i_k}$ or $v_{i_{k+1}}$,
and it follows with $v_{i_k-1} \neq v_{i_k}$ \emph{and} $v_{i_{k+1}-1} \neq v_{i_{k+1}}$ that the respective position in the run is
$(v_{i_k},f)$ \emph{or} $(v_{i_{k+1}},f)$ (in other words: $x_{i_k} = f$ or $x_{i_{k+1}}=f$).
Thus, the run contains infinitely many final states and is therefore rejecting.

Thus, we have shown that $\B^{\G_{v_0}}_C$ accepts the right set of trace-words.
We now continue with the simpler proof that it accepts the right set of $\natural$-words.

First, words starting with $\natural v_0$ are accepted and in $ \scL'(\G_{v_0})$, while
words starting with $\natural v$ and $v \neq v_0$ are rejected and not in $\scL'(\G_{v_0})$.

A \emph{$\natural$-word} that starts with $\alpha=v_1v_2v_3\ldots v_n \natural w \in V^+ \natural V_\natural$
is in  $\scL'(\G_{v_0})$ if, and only if,
\begin{enumerate}
 \item $v_{i-1}=v_i$ or $\{v_{i-1},v_i\}\in E$ holds for all $i \leq n$, and
 \item $v_n = w$.
\end{enumerate}
If they both hold, the (accepting) run of $\B^{\G_{v_0}}_C$ has the form
$(v_0,n)(v_1,x_1)(v_2,x_2)(v_3,x_3)\ldots(v_n,x_n)(v_n,\natural)\top^\omega$.

If (1) holds but (2) does not, the (rejecting) run of $\B^{\G_{v_0}}_C$ has the form
$(v_0,n)(v_1,x_1)(v_2,x_2)(v_3,x_3)\ldots(v_n,x_n)(v_n,\natural)\bot^\omega$.

If (1) does not hold and $k \leq n$ is the smallest index with $v_{i-1}\neq v_i$ and $\{v_{i-1},v_i\}\notin E$,
the (rejecting) run of $\B^{\G_{v_0}}_C$ has the form
$(v_0,n)(v_1,x_1)(v_2,x_2)(v_3,x_3)\ldots(v_{k-1},x_{k-1})\bot^\omega$.

As this covers all cases, we get $\scL(\B^{\G_{v_0}}_C) = \scL'(\mathcal G_{v_0})$.
\qed
\end{proof}

Corollary \ref{cor:coatLeast} and Lemma \ref{lem:cocorrect} immediately imply:

\begin{corollary}
\label{cor:coreduce}
Let $C$ be a minimal vertex cover of a nice graph $\mathcal G_{v_0}=(V,E)$. Then $\B^{\G_{v_0}}_C$ is a minimal deterministic \cobuchi\ automaton
that recognises the characteristic language of $\mathcal G_{v_0}$,
and there is no good-for-games parity automaton with less states than  $\B^{\G_{v_0}}_C$ that recognises the same language.
\qed
\end{corollary}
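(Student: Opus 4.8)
The plan is to reuse, verbatim in structure, the two-sided counting argument behind Corollary~\ref{cor:reduce}, now fed with the \cobuchi\ versions of its two ingredients: the construction correctness of Lemma~\ref{lem:cocorrect} for the upper bound, and the structural lower bound of Corollary~\ref{cor:coatLeast}. Since both $\B^{\G_{v_0}}_C$ and every competitor are compared in the number of states, everything reduces to matching one explicit count against one lower bound.

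For the upper bound I would first read the state count off the construction: $\B^{\G_{v_0}}_C$ has state set $(V\times\{n,\natural\})\uplus(C\times\{f\})$, hence exactly $|V|+|V|+|C|=2|V|+|C|$ states ($\top$ and $\bot$ are not counted). By Lemma~\ref{lem:cocorrect}, read as a \cobuchi\ automaton it recognises the adjusted language $\scL'(\G_{v_0})$. So there is a deterministic \cobuchi\ automaton for $\scL'(\G_{v_0})$ with $2|V|+|C|$ states, and because $C$ is a \emph{minimal} cover this is the number we aim to match.

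For the lower bound I would invoke Corollary~\ref{cor:coatLeast}: any good-for-games parity automaton $\B$ recognising $\scL'(\G_{v_0})$ yields a vertex cover $C_{\B}$ of $\G_{v_0}$ and has at least $2|V|+|C_{\B}|$ states. As every cover of $\G_{v_0}$ has size at least $|C|$, such a $\B$ has at least $2|V|+|C|$ states; this is precisely the second assertion. The first assertion then follows by observing that a deterministic \cobuchi\ automaton is a special case of a good-for-games parity automaton---determinism furnishes a (history-independent) good-for-games strategy $\nu$, and a \cobuchi\ acceptance condition is a parity condition---so the lower bound $2|V|+|C|$ applies to deterministic \cobuchi\ automata as well; since $\B^{\G_{v_0}}_C$ is itself a deterministic \cobuchi\ automaton attaining it, it is state-minimal among them.

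The only thing to watch---and it is bookkeeping rather than a real obstacle, which is why the corollary is \emph{immediate}---is the even/odd swap inherited from Lemma~\ref{lem:colarge}: the witnessing core vertex-states now carry the opposite parity from the \buchi\ case, so the cover extracted for Corollary~\ref{cor:coatLeast} must be read off core states of the correct parity, and this convention has to be kept consistent when the lower bound is applied. Once that is fixed, the decomposition into $|V|$ core $\natural$-states, $|V|$ vertex-states of one parity, and $|C|$ cover vertex-states of the other is identical to the \buchi\ count, and the two bounds meet at $2|V|+|C|$.
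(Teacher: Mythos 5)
Your proposal is correct and follows exactly the paper's route: the paper derives this corollary immediately from Corollary~\ref{cor:coatLeast} (lower bound $2|V|+|C'|$ for any GFG parity automaton, with $C'$ a cover, hence at least $2|V|+|C|$ by minimality of $C$) together with Lemma~\ref{lem:cocorrect} (the explicit $2|V|+|C|$-state deterministic \cobuchi\ automaton attaining it), which is precisely your two-sided counting argument. Your parity-bookkeeping caveat is also well taken---the statement of Corollary~\ref{cor:coatLeast} carries over the word ``even'' verbatim from the \buchi\ case, whereas by Lemma~\ref{lem:colarge}(6) the cover for the adjusted language must be read off the \emph{odd}-priority vertex-states.
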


The Corollaries \ref{cor:reduce} and \ref{cor:coreduce} provide us with the hardness result.

\begin{theorem}
\label{theo:hard}
For a
\begin{itemize}
 \item good-for-games \buchi\ automaton
 \item deterministic \buchi\ automaton
\end{itemize}
and a bound $k$, it is NP hard to check if there is a
\begin{itemize}
 \item good-for-games \buchi\ automaton
 \item good-for-games parity automaton
\end{itemize}
with at most
\begin{itemize}
 \item $k$ states
 \item $k$ transitions.
\end{itemize}

For a
\begin{itemize}
 \item good-for-games \cobuchi\ automaton
 \item deterministic \cobuchi\ automaton
\end{itemize}
and a bound $k$, it is NP hard to check if there is a
\begin{itemize}
 \item good-for-games \cobuchi\ automaton
 \item good-for-games parity automaton
\end{itemize}
with at most
\begin{itemize}
 \item $k$ states
 \item $k$ transitions.
\end{itemize}

For a
\begin{itemize}
 \item good-for-games parity automaton
 \item deterministic parity automaton
\end{itemize}
and a bound $k$, it is NP hard to check if there is a
\begin{itemize}
 \item good-for-games parity automaton
\end{itemize}
with at most
\begin{itemize}
 \item $k$ states
 \item $k$ transitions.
\end{itemize}

(These claims hold in all combinations.)
\end{theorem}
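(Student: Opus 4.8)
The plan is to reduce the vertex-cover problem for nice graphs---NP-complete by Lemma~\ref{lem:NPC}---to each of the minimisation problems, reusing the matching upper and lower bounds that have already been assembled. Given a nice graph $\G_{v_0}=(V,E)$ together with a bound $j$ on its vertex cover, I would construct in polynomial time the deterministic automaton $\B^{\G_{v_0}}_V$ obtained from the \emph{trivial} vertex cover $C=V$; it has $3|V|$ states and recognises the characteristic language $\scL(\G_{v_0})$ by Lemma~\ref{lem:correct}. The produced instance then asks whether there is a language-equivalent good-for-games automaton with at most $k=2|V|+j$ states. Because a deterministic automaton is in particular good-for-games, this one automaton $\B^{\G_{v_0}}_V$ serves simultaneously as the input for the ``deterministic automaton'' rows and for the ``good-for-games automaton'' rows, so no separate construction is needed for the two input types.

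Correctness is exactly where the two bounds meet. For one direction, if $\G_{v_0}$ has a vertex cover $C$ with $|C|\le j$, then $\B^{\G_{v_0}}_C$ is a deterministic---hence good-for-games, and in particular good-for-games parity---automaton recognising $\scL(\G_{v_0})$ with $2|V|+|C|\le 2|V|+j=k$ states. For the converse, Corollary~\ref{cor:atLeast} shows that \emph{every} good-for-games parity automaton recognising $\scL(\G_{v_0})$ has at least $2|V|+|C'|$ states, where $C'=\{v\mid\text{there is a }v\text{-state of even priority}\}$ is a genuine vertex cover; hence any such automaton with at most $2|V|+j$ states forces $|C'|\le j$ and thereby exhibits a cover of size at most $j$. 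Thus $\G_{v_0}$ admits a vertex cover of size $j$ if, and only if, $\scL(\G_{v_0})$ is recognised by a good-for-games automaton with at most $2|V|+j$ states. Since the lower bound of Corollary~\ref{cor:atLeast} is phrased for good-for-games \emph{parity} automata and the witness $\B^{\G_{v_0}}_C$ is a (deterministic) \buchi\ automaton, this single reduction covers both \buchi-output and parity-output state-counting variants, and viewing the \buchi\ input as a parity automaton discharges the parity-input rows as well.

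For the \cobuchi\ rows I would repeat the construction verbatim, but feed in $\B^{\G_{v_0}}_V$ read as a \cobuchi\ automaton recognising the adjusted language $\scL'(\G_{v_0})$ (Lemma~\ref{lem:cocorrect}), the matching lower bound now being Corollary~\ref{cor:coatLeast}; the only change between the two settings is the exchange of even and odd priorities, so the argument is otherwise identical. Each reduction is computable in time polynomial in the size of $\G_{v_0}$, and membership in NP from Corollary~\ref{cor:inNP} upgrades the hardness to completeness, which is what Theorem~\ref{theo:main} records.

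The delicate point---and the step I expect to be the main obstacle---is the transition-table measure. Here I would exploit that the witnessing automaton $\B^{\G_{v_0}}_C$ is deterministic, so that over the fixed alphabet $V_\natural$ its transition count is governed by its state count, and set the transition bound accordingly. What has to be argued with care is a \emph{transition} lower bound to parallel the state bound of Corollary~\ref{cor:atLeast}: one must show that no good-for-games automaton can undercut the deterministic count by using a partial (blocking) transition function, i.e.\ that the transitions forced by those letters which can continue a word of $\scL(\G_{v_0})$ out of each core state already account for the transitions of $\B^{\G_{v_0}}_C$. This is precisely the ingredient that fails to transfer to transition-based acceptance, in line with the paper's remark that the hardness does not extend to that model.
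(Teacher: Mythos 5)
Your reduction for the state-count rows is exactly the paper's: feed in the deterministic automaton $\B^{\G_{v_0}}_V$ built from the trivial cover (deterministic, hence good-for-games, so one input serves both input columns), set $k=2|V|+j$, and play Lemma~\ref{lem:correct} against Corollary~\ref{cor:atLeast} (respectively Lemma~\ref{lem:cocorrect} against Corollary~\ref{cor:coatLeast} for the adjusted language in the \cobuchi\ rows). This part is correct, complete, and is the same argument the paper gives via Corollaries~\ref{cor:reduce} and~\ref{cor:coreduce}.

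The gap is in the transition-table rows, and it is one you created yourself by silently switching conventions. In this paper $\delta$ is \emph{total}: blocking is modelled by a transition to $\bot$, and $\delta(q,a)$ is never empty, so $|\delta(q,a)|\geq 1$ for every $q\in Q$ and $a\in\Sigma$, and every NPA has transition-table size at least $|Q|\cdot|\Sigma|$, with a deterministic automaton attaining exactly $|Q|\cdot|\Sigma|$. Hence the state lower bound of Corollary~\ref{cor:atLeast} immediately yields the transition lower bound $(2|V|+|C|)\cdot|V_\natural|$ for every language-equivalent GFG automaton, and the reduction with bound $k=(2|V|+j)\cdot|V_\natural|$ goes through with no further work---this is precisely what the paper means by ``as the automaton is deterministic and state minimal, it is also minimal in the size of the transition table.'' The ``delicate'' lower bound you postpone (that a GFG automaton cannot undercut the deterministic count by a partial, blocking transition function) is only an issue under a partial-$\delta$ convention, which is not the one in force; as written, your treatment of the transition rows is left unfinished for a reason that does not exist. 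Relatedly, your closing sentence conflates the transition-table \emph{size measure} with transition-\emph{based acceptance}: the reason the hardness proof does not extend to transition-based acceptance is that the counting argument of Lemma~\ref{lem:large} collapses when priorities sit on transitions (a single state can then serve both the odd- and the even-priority role for a vertex, destroying the $2|V|+|C|$ bound), not any subtlety about counting the transitions of state-based automata.
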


\section{Discussion}
We have established that determining if a good-for-games automaton with \buchi, \cobuchi\, or parity condition and
state based acceptance is minimal, or that there is a GFG automaton with size up to $k$, is NP-complete.
Moreover, this holds regardless of whether the starting automaton is given as a (\buchi, \cobuchi, or parity) qood-for-games automaton, or if it presented as
a (\buchi, \cobuchi, or parity) deterministic automaton.

This drags three open questions into the limelight.
The first is the complexity of testing whether or not a given nondeterministic automaton is good-for-games.
Our results give no answer to this question: it simply accepts that a given automaton is good-for-games, and
only guarantees a correct answer if the input is valid.
GFG-ness is, however, known to be tractable for \buchi~\cite{DBLP:conf/fsttcs/BagnolK18} and \cobuchi~\cite{DBLP:conf/icalp/KuperbergS15} automata, and the extension to the more expressive class
parity good-for-games automata is active research.

It also raises the question if the difference is in good-for-games automata being inherently simpler to minimise, or if it is a property of
choosing the less common transition based acceptance:
the second and open challenge is whether the tractability of minimising \cobuchi\ good-for-games automata forebears the
tractability of minimising the general class of parity good-for-games automata, while the third challenge is the question of whether NP hardness
extends to transition based deterministic \buchi, \cobuchi, and parity automata.

The latter two challenges ask for the level of superiority of transition based acceptance: they have the natural advantage of potentially---slightly---higher succinctness,
but a combination with an improved complexity could turn this class into the standard.

In addition to the `transition vs.\ state based acceptance' question, another question is whether or not nondeterminism is the right starting point for GFG-ness,
or if alternation is the better choice \cite{boker_et_al:LIPIcs:2019:10921}. For such alternating automata, most of the succinctness and complexity questions for membership and minimisation are wide open.

\paragraph{\bf Acknowledgments.} Many thanks to Patrick Totzke and Karoliina Lehtinen for valuable feedback, enduring and helping to erase the
errors of draft versions, and pointers to beautiful related works.

%

\end{document}